\documentclass{article}

\usepackage[preprint]{neurips_2026}

\usepackage{booktabs}
\usepackage{tabularx}
\usepackage{microtype}
\usepackage{graphicx}
\usepackage{subcaption}
\usepackage{booktabs} 
\usepackage{xcolor}
\usepackage{hyperref}

\usepackage{amsmath}
\usepackage{amssymb}
\usepackage{mathtools}
\usepackage{amsthm}
\usepackage{algorithm,algorithmic}

\usepackage[capitalize,noabbrev]{cleveref}

\theoremstyle{plain}
\newtheorem{theorem}{Theorem}[section]

\newtheorem{lemma}[theorem]{Lemma}
\newtheorem{corollary}[theorem]{Corollary}
\theoremstyle{definition}
\newtheorem{definition}[theorem]{Definition}

\theoremstyle{remark}

\DeclareMathOperator{\sinc}{sinc}

\title{Spectral Filtering for \\ Complex Linear Dynamical Systems}
\author{%
  Elad Hazan $^\dagger$ \thanks{Princeton University}
  \And
  Annie Marsden 
  \thanks{Google DeepMind} 
}
\begin{document}
\maketitle

\newcommand{\Cbeta}{\mathbb{C}_\beta}
\newcommand{\ZW}{Z_W(\beta)}
\newcommand{\y}{\mathbf{y}}
\newcommand{\h}{\mathbf{h}}
\newcommand{\norm}[1]{\lvert \lvert #1 \rvert \rvert}

\begin{abstract}

We study the problem of learning complex-valued linear dynamical systems (CLDS) with sector-bounded spectrum. This class captures oscillatory and long-memory dynamics arising in signal processing, structured state space models, and quantum systems. We introduce a spectral filtering method based on the Slepian basis and show that learnability is governed by an effective dimension
independent of the ambient state dimension. As a consequence, we obtain dimension-free regret bounds for sequence prediction in CLDS with spectrum contained in a sector of the unit disk.

\end{abstract}

\section{Introduction}

Learning dynamical systems with complex-valued spectra is a central problem in modern machine learning and control. Such systems naturally arise in settings involving oscillatory or long-range temporal structure, including signal processing, linear recurrent neural networks, structured state space models (SSMs) \cite{gu2022s4,gupta2022dss,fu2023h3,gu2024mamba}, and quantum dynamics. In these applications, complex eigenvalues enable compact representations of phase and frequency information, but they also induce slow decay and long memory, making learning and prediction challenging.

A standard approach to learning dynamical systems is system identification, which aims to recover the underlying state-space parameters. However, for high-dimensional systems, this approach is often computationally prohibitive and statistically inefficient, as its complexity scales with the ambient state dimension. An alternative paradigm, which has gained traction in recent years, is \emph{improper dynamic learning}, where the goal is to predict future observations directly from past data without explicitly reconstructing the system parameters.

\subsection{Problem Setup}

Our goal is to sequentially predict a stream of observations $y_1,y_2,\dots$ from past inputs and outputs, under squared loss,
\[
\sum_{t=1}^T \|y_t-\hat y_t\|^2.
\]
We assume the observations are generated by a complex-valued linear dynamical system (CLDS) of the form
\begin{equation}
\label{eq:LDS_connection}
    y_t = \sum_{\tau=1}^{\infty} C A^{\tau-1} B u_{t-\tau},
\end{equation}
where $A \in \mathbb{C}^{d \times d}$ is the hidden transition matrix, $B$ is the input map, and $C$ is the observation map. The hidden dimension $d$ may be very large, and our objective is to design learning algorithms whose guarantees depend only mildly---ideally logarithmically---on this ambient dimension.

In principle, a stable linear dynamical system may have an infinite impulse response, so that outputs depend on the entire past input sequence. In modern sequence prediction, however, both algorithms and deployed models operate with a finite context window: at prediction time, only the most recent $W$ inputs are retained and processed. We therefore formulate the learning problem over a finite prediction window $W$. Equivalently, the learner competes with the best sector-bounded CLDS predictor after truncating its impulse response to the most recent W lags:
\[
y_t = \sum_{\tau=1}^{W} C A^{\tau-1} B u_{t-\tau}.
\]
Thus, the prediction problem is to learn from a window of the previous $W$ inputs while competing with the best predictor induced by the underlying CLDS. 

We impose a spectral assumption on the transition matrix $A$. Specifically, we assume that its spectrum lies in the sector
\begin{equation}
    \Cbeta \triangleq \{ z \in \mathbb{C} : |z| \le 1,\ |\arg(z)| \le \beta \}.
\end{equation}
This condition controls stability and oscillatory complexity of the dynamics through the angular aperture $\beta$. Intuitively, smaller $\beta$ corresponds to lower effective frequency content, and hence to a smaller intrinsic prediction complexity.

This model also has potential relevance to quantum dynamics: in linear-response or reduced-order regimes, driven quantum systems are often described by complex linear evolutions, for example through Liouvillian generators together with control and measurement operators.

\subsection{Main Results}

We give an efficient algorithm, described in Algorithm~\ref{alg:qsf}, based on a new variant of spectral filtering for complex-valued linear dynamics. Our main result shows that the prediction error depends on the \emph{effective spectral complexity} of the system, rather than on the hidden state dimension.

Informally, let $\{y_t\}_{t=1}^T$ be generated by a noiseless CLDS whose transition matrix has spectrum contained in $\mathbb{C}_\beta$ and whose memory is bounded by $W$. Then Algorithm~\ref{alg:qsf} produces causal predictions $\{\hat y_t\}$ satisfying
\[
\frac{1}{T}\sum_{t=1}^T |\hat y_t-y_t|^2
    \le
    \tilde O\!\left(\frac{\beta W \log^2 T}{T}\right).
\]
The precise statement appears in Corollary~\ref{cor:main}.

The key qualitative features of this guarantee are as follows.
\begin{enumerate}
    \item \textbf{Dimension-free learning.}  
    The running time and prediction guarantee do not depend on the hidden dimension $d$ of the dynamical system. Instead, the relevant complexity is an effective dimension governed by the sector width $\beta$ and the memory window $W$. Our analysis shows that this reduction is a manifestation of the classical Shannon-number phenomenon from time--frequency concentration, via the spectral concentration of discrete prolate spheroidal sequences (DPSS) \cite{slepian1978prolate}.

    \item \textbf{Compression of long memory.}  
    Although the system has memory horizon $W$, the number of learned parameters is controlled by the effective dimension rather than by $W$ itself. 
Moreover, fast convolutional implementations of spectral filtering can significantly reduce the dependence on $W$, following ideas from \cite{agarwal2024futurefill}.
    
\end{enumerate}

For comparison, a direct regression approach over the previous $W$ observations would yield the bound
\[
\frac{1}{T}\sum_{t=1}^T |\hat y_t-y_t|^2
    \le
    \tilde O\!\left(\frac{W \log T}{T}\right),
\]
using standard online linear regression methods \cite{CesaBianchi2006,Hazan2022}. However, such an approach requires $\Theta(W)$ parameters and memory, and its running time is at least quadratic in $W$. Alternatively, the Cayley-Hamilton theorem implies that a linear filter over the past observations can accurately predict a CLDS with $d$ coefficients, where $d$ is the hidden dimension of the system, see e.g. \cite{hazan2022introduction} chapter 9. This means that both the running time and regret would depend on $d$. 

The comparison is summarized in Table~\ref{tab:comparison}: our guarantee
replaces the raw window length \(W\) and the hidden dimension \(d\) by the
sector-controlled effective dimension \(k^\star\simeq \beta W/\pi\).

\begin{table}[h!]
\centering
\small
\setlength{\tabcolsep}{3.5pt}
\renewcommand{\arraystretch}{1.35}

\begin{tabular}{|
>{\centering\arraybackslash}p{0.20\linewidth}|
>{\centering\arraybackslash}p{0.14\linewidth}|
>{\centering\arraybackslash}p{0.17\linewidth}|
>{\raggedright\arraybackslash}p{0.38\linewidth}|}
\hline
\textbf{Average error (omitted log factors)} 
& \textbf{Dependence on $d$} 
& \textbf{Complex angle} 
& \textbf{Methods} \\ 
\hline

${W}/{T}$ 
& $O(1)$ 
& $\pi$ 
& Window regression over the last $W$ inputs. \\
\hline

$ d/T$ 
& $d$ 
& $\pi$ 
& Closed-loop / Cayley--Hamilton regression. \\
\hline

$ T^{-1/2}$ 
& $O(1)$ 
& $0$ (real)
& Spectral filtering 
\citep{hazan2017learning}. \\
\hline

$ T^{-2/13}$ 
& $O(1)$ 
& $\log^{-2} T$ 
& USP + regression \citep{marsden2025universal}. \\
\hline

$ \beta W/T$ 
& $O(1)$
& $\beta$ 
& \textbf{Ours.} \\
\hline
\end{tabular}

\caption{\label{tab:comparison}
Schematic comparison of average prediction error bounds for marginally stable
LDS/CLDS, omitting logarithmic and problem-dependent factors. The losses and
assumptions are not identical across rows: spectral filtering applies to
real/symmetric dynamics, \citep{marsden2025universal} is stated under complex-angle
assumptions, and our result is a realizable finite-window squared-loss bound.
The ``complex angle'' column denotes the required upper bound on
\(\max_j |\arg(\lambda_j)|\). Our guarantee is finite-window: in cumulative
form it is \(\widetilde O(\beta W)\), or equivalently
\(\widetilde O(\beta W/T)\) average error.
}
\end{table}

We also prove in Appendix~\ref{appendix:lower_bound} that this improvement fundamentally relies on the spectral restriction. In particular, when no bound on the sector width is imposed, one cannot in general avoid an $\Omega(W/T)$ dependence, even information-theoretically.

\section{Related Work}

\paragraph{Learning complex linear dynamical systems.}
Learning linear dynamical systems with asymmetric transition operators and complex eigenvalues is a classical challenge in prediction and control. Complex spectra naturally encode oscillatory behavior and long-range temporal structure, but they also make both identification and online prediction significantly harder. Existing methods for general LDS and recurrent linear models can handle such systems, but their guarantees typically scale at least linearly with the hidden state dimension \cite{ghai2020no,bakshi2023new}. This dependence is prohibitive in the high-dimensional settings that motivate modern sequence modeling. In contrast, our goal is to exploit spectral structure so that both computation and learning guarantees depend on an intrinsic complexity measure rather than on the ambient dimension.

\paragraph{Improper learning and spectral filtering.}
Our work is most closely related to the literature on \emph{improper learning} of dynamical systems, where one predicts future observations directly from filtered histories without explicitly recovering the underlying state-space parameters. This viewpoint has led to algorithms with sublinear regret for broad classes of LDS, and in some cases near-dimension-free guarantees \cite{marsden2025universal}. Our contribution is in this spirit, but tailored to complex-valued systems whose spectrum lies in a sector $\mathbb{C}_\beta$ of the unit disk. By constructing filters from the Slepian/DPSS basis matched to this spectral geometry, we obtain guarantees governed by an effective dimension of order $\beta W$, where $W$ is the memory horizon. A related result of \cite{franccois2025uncertainty} studies a much more restricted shift-based setting and derives sharp retrieval limits; our results may be viewed as extending this type of information cutoff phenomenon to a substantially broader class of complex linear systems.

\paragraph{Structured state space models and complex parameterizations.}
Recent advances in structured state space models (SSMs), including S4 and its diagonal variants, have highlighted the importance of spectral parameterizations for efficient long-range sequence modeling~\cite{gu2022s4,gupta2022dss,fu2023h3,gu2024mamba}. Our perspective is further motivated by theoretical work showing that complex eigenvalues are not merely a convenient modeling choice, but are often essential for efficiently representing oscillatory and long-memory dynamics. In particular, \cite{ran2024provable} establish strong limitations for real-valued diagonal SSMs when approximating oscillatory systems, while \cite{orvieto2023universality} identify conditioning barriers that arise when complex modes are excluded. These results support the central structural assumption in our work: for oscillatory sequence prediction, the relevant notion of complexity is often spectral localization in the complex plane rather than low hidden dimension.

\paragraph{Time--frequency concentration and Slepian bases.}
The mathematical mechanism behind our compression result comes from the classical theory of time--frequency concentration. Discrete prolate spheroidal sequences (DPSS), introduced by Slepian and collaborators, form an optimally concentrated basis for sequences that are simultaneously time-limited and band-limited \cite{slepian1978prolate}. We leverage the same principle in the setting of complex linear dynamics. When the spectrum of the transition operator is confined to a sector $\mathbb{C}_\beta$, the resulting impulse responses have an intrinsic bandwidth, and the associated predictor class admits a low-dimensional approximation governed by a Shannon-number-type quantity. This connection underlies both our spectral analysis and the design of our filtering algorithm.

\paragraph{Prediction versus identification.}
Classical system identification aims to recover the latent parameters $(A,B,C)$ of the underlying state-space model \cite{simchowitz2019semiparametric,tsiamis2019stochsysid,lee2022improved,bakshi2023new}. While this objective is appropriate when one seeks interpretability or control design, it is often unnecessarily strong for forecasting. In high-dimensional settings, exact recovery can be statistically inefficient and computationally prohibitive even when accurate prediction is possible. Our work therefore adopts the improper-learning viewpoint: we compete directly with the predictor induced by the unknown system rather than attempting to reconstruct the system itself.


\section{Spectral Geometry of Complex Linear Dynamics}

The central structural question in this paper is the following: how large is the class of length-$W$ responses generated by a complex linear dynamical system whose hidden spectrum lies in the sector
$ \Cbeta \triangleq \{ z \in \mathbb{C} : |z| \le 1,\ |\arg(z)| \le \beta \} $?
Although such responses live in the ambient space $\mathbb{C}^W$, we will show that their effective complexity is much smaller, and is governed by the spectral aperture $\beta$ and the memory horizon $W$ rather than by the hidden state dimension.

We first introduce the matrix that captures this geometry and shows that its singular values exhibit a sharp cutoff. That cutoff will later determine the number of features needed for efficient prediction.

\paragraph{The Spectral Information Matrix.}
For intuition, and exactly in the diagonalizable case, a single eigenmode with eigenvalue $z \in \Cbeta$ contributes a truncated impulse response of the form
\[
\mu_W(z) \triangleq [1, z, z^2, \dots, z^{W-1}]^\top \in \mathbb{C}^W.
\]
These monomial trajectories are the basic building blocks of the predictor class induced by a sector-bounded CLDS. This motivates the following averaged Gram matrix.

\begin{definition}[Spectral Information Matrix]
\label{def:sim}
The \emph{Spectral Information Matrix} associated with window length $W$ and sector width $\beta$ is
\begin{equation}
\label{eq:integral_def}
    Z_W(\beta)
    \triangleq
    \int_{\Cbeta} \mu_W(z)\mu_W(z)^\dagger \, dA(z),
\end{equation}
where $dA(z)=r\,dr\,d\theta$ denotes area measure in polar coordinates.
\end{definition}

Evaluating the integral in polar coordinates, for $0 \le j,k \le W-1$ we obtain
\begin{align}
    (Z_W(\beta))_{jk}
    &=
    \int_{-\beta}^{\beta}\int_0^1
    r^{j+k} e^{i(j-k)\theta}\, r\,dr\,d\theta \notag\\
    &=
    \left(\int_0^1 r^{j+k+1}\,dr\right)
    \left(\int_{-\beta}^{\beta} e^{i(j-k)\theta}\,d\theta\right) \notag\\
    &=
    \frac{2\beta}{j+k+2}\,\sinc\!\big((j-k)\beta\big).
\end{align}

This formula reveals a clean separation between radial decay and angular oscillation:
\begin{equation}
    Z_W(\beta)=H_W \circ S_W(\beta),
\end{equation}
where $\circ$ denotes the Hadamard product, and
\[
(H_W)_{jk}=\frac{1}{j+k+2},
\qquad
(S_W(\beta))_{jk}=2\beta\,\sinc\!\big((j-k)\beta\big).
\]

The two factors have distinct interpretations:
\begin{enumerate}
    \item \textbf{Radial/stability factor.}
    $H_W$ is a Hankel matrix that depends only on $j+k$. It captures the effect of radial decay $|z|<1$, i.e.\ how quickly past inputs are attenuated.
    \item \textbf{Angular/oscillatory factor.}
    $S_W(\beta)$ is a Toeplitz matrix that depends only on $j-k$. It captures the phase variation allowed by the sector and coincides with the finite-window Slepian/DPSS concentration kernel \cite{slepian1978prolate}.
\end{enumerate}

\subsection{Effective Spectral Dimension and Information Cutoff}

A basic consequence of the representation above is that the class of admissible length-$W$ responses has an intrinsic dimension much smaller than $W$. 
\begin{definition}[Effective Spectral Dimension]
\label{def:effective_dimension}
The \emph{effective spectral dimension} associated with horizon $W$ and sector width $\beta$ is
\[
k^\star \triangleq \left\lceil \frac{\beta}{\pi}W \right\rceil.
\]
\end{definition}

The quantity $k^\star$ plays the role of a Shannon number for sector-bounded CLDS. It measures the number of oscillatory degrees of freedom that can be resolved over a horizon of length $W$ when the spectrum is confined to $\Cbeta$.

The next theorem formalizes this intuition.
\begin{theorem}[Information Cutoff]
\label{thm:cutoff}
Let \(Z_W(\beta)\in \mathbb{C}^{W\times W}\) be the Spectral Information Matrix, and let
$
k^\star \;:=\; \left\lceil \frac{\beta}{\pi} W \right\rceil .
$
Then for each fixed \(\beta\in(0,\pi]\), there exist constants
\(C_\beta,c_\beta>0\) such that for all \(k \ge k^\star\),
\[
\sigma_k\!\left(Z_W(\beta)\right)
\;\le\;
C_\beta
\exp\!\left(
-\,c_\beta\,\frac{k}{k^\star \log(W)}
\right).
\]
\end{theorem}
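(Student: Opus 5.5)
The plan is to prove the statement in two layers: first a short argument that already gives the inequality exactly as written, then a structural argument explaining why the cutoff sits at $k^\star$. The second layer is what the later sections actually use.

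\textbf{Step 1: uniform operator bound.} Both factors in $Z_W(\beta)=H_W\circ S_W(\beta)$ are positive semidefinite. For $H_W$, write $(H_W)_{jk}=\int_0^1 r^{j}r^{k}\,r\,dr$. For $S_W(\beta)$, write $(S_W(\beta))_{jk}=\int_{-\beta}^{\beta}e^{ij\theta}e^{-ik\theta}\,d\theta$. Schur's inequality then gives $\|A\circ P\|\le \max_i P_{ii}\,\|A\|$ for PSD $P$. Moreover, $S_W(\beta)$ is the Toeplitz matrix whose symbol is $2\pi\mathbf 1_{[-\beta,\beta]}$, so $\|S_W(\beta)\|\le 2\pi$. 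Hence
\[
\sigma_k(Z_W(\beta))\le \|Z_W(\beta)\|\le \max_i (H_W)_{ii}\,\|S_W(\beta)\|\le \tfrac12\cdot 2\pi=\pi
\]
for every $k$.

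\textbf{Step 2: the literal statement.} The index satisfies $k\le W$, and $k^\star\ge \beta W/\pi$. Therefore $k/k^\star\le \pi/\beta$, and for $W\ge 3$ the exponent obeys
\[
c_\beta\,\frac{k}{k^\star\log W}\le \frac{\pi c_\beta}{\beta}.
\]
So the inequality holds for any $c_\beta>0$ once we take $C_\beta:=\pi e^{\pi c_\beta/\beta}$. This observation makes clear that the content of the theorem lies in making $c_\beta$ and $C_\beta$ mild, and that is what the structural route provides.

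\textbf{Step 3: structural argument, low-rank factors.} The goal is to show $\sigma_{rm+1}(Z_W(\beta))$ is small for suitable $r$ and $m$, using $\operatorname{rank}(A\circ B)\le \operatorname{rank}(A)\operatorname{rank}(B)$.

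\begin{itemize}
\item For the Hilbert-type factor, I would invoke the Beckermann--Townsend bound for Cauchy/Hankel matrices with displacement rank one. This gives PSD truncations $H_r$ with $\|H_W-H_r\|\le C e^{-c r/\log W}$ and $\operatorname{diag}(H_r)\le \operatorname{diag}(H_W)\le \tfrac12$.
\item For the prolate factor, I would invoke the Slepian/Landau--Widom eigenvalue distribution for the DPSS kernel. This gives $\lambda_{k^\star+j}(S_W(\beta))\le C e^{-c j/\log W}$, and I would let $S_m$ be the rank-$m$ spectral truncation with $m=k^\star+j$.
\end{itemize}

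\textbf{Step 4: splitting the error.} Write
\[
Z-H_r\circ S_m=(H_W-H_r)\circ S_W(\beta)+H_r\circ (S_W(\beta)-S_m).
\]
Applying Schur's inequality to each term, with $S_W(\beta)$ and $H_r$ as the PSD weights, gives
\[
\sigma_{rm+1}(Z)\le 2\beta\,C e^{-cr/\log W}+\tfrac12\,C e^{-cj/\log W}.
\]

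\textbf{Step 5: choosing parameters.} Given $k\ge k^\star$, I would pick $j\asymp r$ with $r(k^\star+r)\le k$. In the regime $r\le k^\star$ this yields $r\asymp k/k^\star$, and hence a bound $e^{-c k/(k^\star\log W)}$. The remaining regime $k\gtrsim (k^\star)^2$ is absorbed into constants by the boundedness of $k/k^\star$ from Step 2.

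\textbf{Main obstacle.} The step I expect to be hardest is the uniform-in-$W$ Slepian tail with its $\log W$ transition width. The clean exponential rate after the plunge region is classical only asymptotically. If a uniform rate is not available off the shelf, I would fall back on a Chebyshev or polynomial approximation of the symbol $\mathbf 1_{[-\beta,\beta]}$ smoothed over a window of width $1/W$. That approach costs constants depending on $\beta$, which the statement permits.
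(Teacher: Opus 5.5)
Your proof is correct, and it takes a genuinely different, far more elementary route than the paper.

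The paper combines three ingredients:
\begin{itemize}
\item a trace-tail Hadamard lemma, $\lambda_{kh+1}(A\circ B)\le \lambda_1(B)\mathrm{Tr}(A_{>k})+\lambda_1(A)\mathrm{Tr}(B_{>h})+\mathrm{Tr}(A_{>k})\mathrm{Tr}(B_{>h})$;
\item Hankel decay for $H_W$;
\item the classical DPSS cutoff at $k^\star_+=\lceil(1+\delta_\beta)k^\star\rceil$.
\end{itemize}
From these it obtains $\lambda_{sk^\star_+ +1}(Z_W(\beta))\le K_1\log W\,e^{-cs/\log W}+K_2e^{-c_BW}$ and then ``adjusts constants.''

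Your Steps 1--2 show that for fixed $\beta$ none of this is needed. Since $k\le W$ forces $k/k^\star\le\pi/\beta$, the right-hand side is bounded below by a $\beta$-dependent constant. Schur's inequality then gives $\|Z_W(\beta)\|\le\pi$, which finishes the proof. For $W=2$ you need to enlarge $C_\beta$ slightly, since $\log 2<1$; $W=1$ is degenerate in the statement itself.

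This also patches the paper's own argument. For fixed $\beta$, $s\le W/k^\star_+$ stays bounded, so the paper's bound is essentially $K_1\log W$. Neither that bound nor its fallback $\lambda_1(A\circ B)\le C\log W$ can yield a $W$-independent $C_\beta$ without a uniform bound like your $\|Z_W(\beta)\|\le\pi$.

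Your structural layer likewise sharpens the paper's in two ways:
\begin{itemize}
\item Using Schur's operator-norm inequality instead of trace tails removes the $\log W$ prefactor, which comes from $\mathrm{Tr}(H_W)\asymp\tfrac12\log W$.
\item Using the non-asymptotic transition-band estimate of Karnik et al., $\lambda_{k^\star+j}(S_W(\beta))\le Ce^{-cj/\log W}$, instead of the $(1+\delta_\beta)k^\star$ cutoff gives $\beta$-independent constants.
\end{itemize}
The $\beta$-independence is what the later sections actually need. They use universal constants with $k\approx k^\star\log^2(T\beta W)\le W$, which forces $\beta$ to be small.

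The one caveat is that this uniform version only covers $k\lesssim(k^\star)^2$. Beyond that range, your choice $j\asymp r$ yields only $e^{-c\sqrt{k}/\log W}$, and falling back on Step 2 reintroduces $\beta$-dependence. That is harmless for the statement as written.
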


In particular, beyond the effective dimension \(k^\star\), the singular values of
\(Z_W(\beta)\) decay at a stretched-exponential rate, with only an additional
\(\log W\) loss in the exponent.

This theorem shows that the informative part of the response class is effectively confined to a $k^\star$-dimensional subspace. Beyond that threshold, the singular values of $Z_W(\beta)$ decay exponentially, so the remaining directions contribute only exponentially small information.

Equivalently, although a CLDS may have a very large hidden state dimension, the set of predictors it induces over a finite window can be compressed to dimension on the order of $\beta W$.

\begin{proof}[Proof Sketch]
The matrix $Z_W(\beta)$ factors as the Hadamard product
\[
Z_W(\beta)=H_W \circ S_W(\beta),
\]
where $H_W$ is a Hankel matrix with rapidly decaying spectrum and $S_W(\beta)$ is the finite-window Slepian concentration matrix. Classical DPSS theory implies that $S_W(\beta)$ has roughly $k^\star\approx \frac{\beta}{\pi}W$ significant eigenvalues, after which its spectrum plunges. On the other hand, the singular values of $H_W$ decay quickly because of its smoothing Hankel structure.

A Hadamard-product tail bound then shows that truncating $H_W$ to rank $s$ and $S_W(\beta)$ to rank $k^\star$ yields an approximation to $Z_W(\beta)$ of rank at most $s k^\star$, with error controlled by the discarded tails of the two factors. Combining these two ingredients yields exponential decay beyond the effective dimension. The full proof appears in Appendix~\ref{app:hadamard_decay}.
\end{proof}

Figure~\ref{fig:spectrum} illustrates this phenomenon empirically: the singular spectrum exhibits a sharp transition around $k^\star$, after which the tail rapidly collapses.

\begin{figure}[h!]
    \centering
    \includegraphics[width=0.8\linewidth]{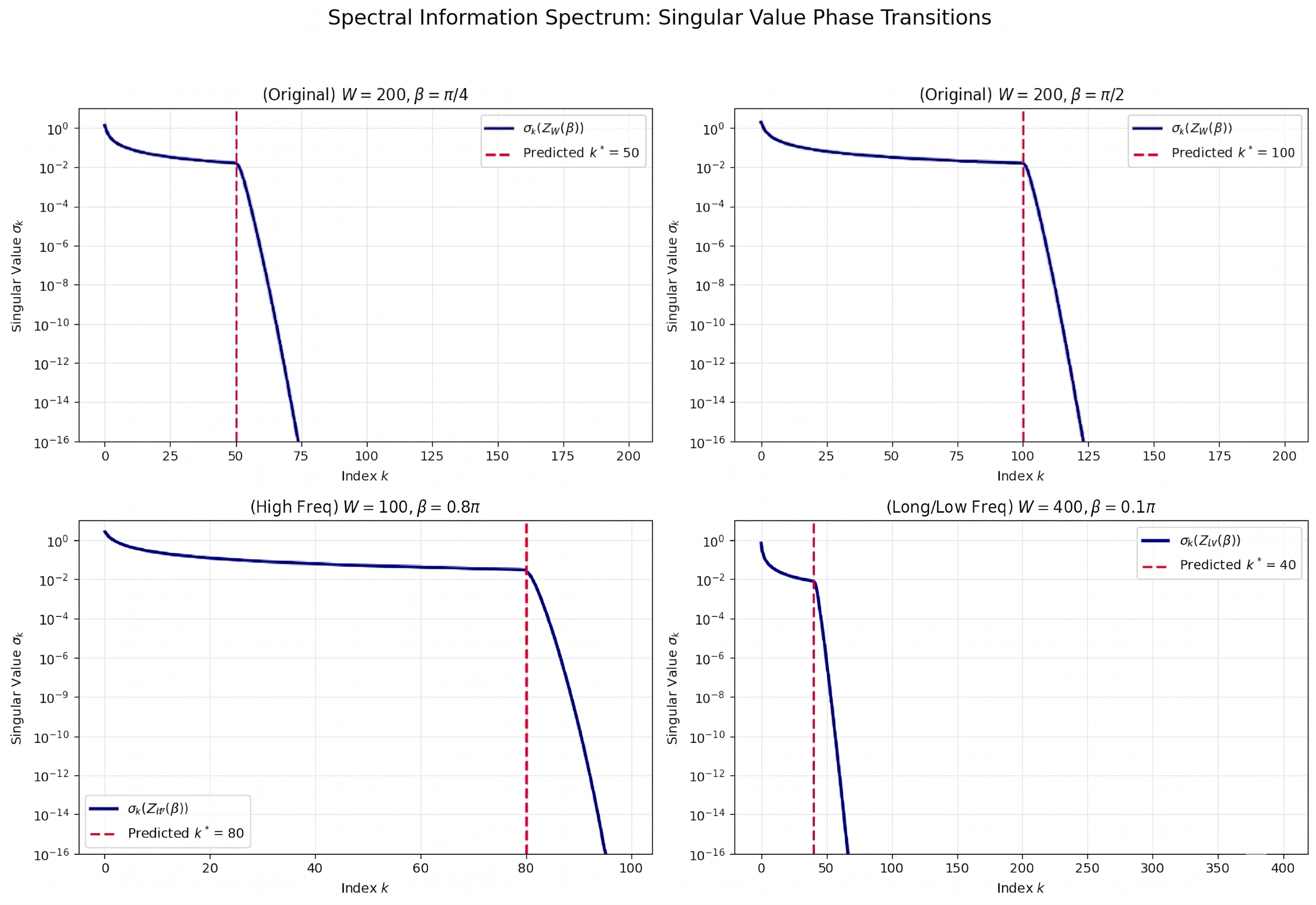}
    \caption{\textbf{Spectral Information Spectrum.} Singular values of $Z_W(\beta)$ on a log scale. The red vertical lines indicate the predicted cutoff
    $k^\star=\left\lceil \frac{\beta}{\pi}W \right\rceil$, separating the informative part of the spectrum from the rapidly decaying tail.}
    \label{fig:spectrum}
\end{figure}

\section{Algorithm: Complex Spectral Filtering}

Theorem~\ref{thm:cutoff} suggests a universal improper-learning strategy for sector-bounded CLDS. Rather than estimating the unknown state-space parameters $(A,B,C)$, we first construct a basis adapted to the class of admissible impulse responses, and then learn only a predictor in that compressed feature space.

This leads to the \emph{Complex Spectral Filtering} algorithm. Its key advantage is that both the feature construction and the learning problem depend on the effective spectral dimension of the class, not on the hidden dimension of the underlying system.

\subsection{Universal Filter Construction}

Let
\[
Z_W(\beta) = \sum_{i=1}^W \sigma_i v_i v_i^\dagger
\]
be the eigendecomposition of the Spectral Information Matrix, with
$
\sigma_1 \ge \sigma_2 \ge \cdots \ge \sigma_W.
$
For a chosen filter budget $k$, we define the universal filter bank
\begin{equation}
    \Phi \triangleq [v_1,\dots,v_k] \in \mathbb{C}^{W\times k}.
\end{equation}

By Theorem~\ref{thm:cutoff}, when $k$ is on the order of $k^\star$ (or modestly larger), the discarded tail contributes only exponentially small approximation error. Thus, the columns of $\Phi$ provide a near-optimal basis for compressing the history of any CLDS whose spectrum lies in $\Cbeta$.

A crucial point is that the filter bank is \emph{universal}: it depends only on the pair $(W,\beta)$, not on the unknown realization $(A,B,C)$ of the system. In other words, the geometry of the model class determines the representation.

\subsection{Learning via Filtered Histories}

At time $t$, we form the length-$W$ input history
\[
U_t \triangleq [u_t,u_{t-1},\dots,u_{t-W+1}]^\top \in \mathbb{C}^W,
\]
with zero-padding when $t<W$. We then project this history onto the filter bank:
$
h_t \triangleq \Phi^\dagger U_t \in \mathbb{C}^k.
$
The vector $h_t$ is a compressed feature representation of the recent past. A standard online regression algorithm can then be used to map $h_t$ to a prediction of the next observation.

\begin{algorithm}[h!]
\caption{Complex Spectral Filtering}
\label{alg:qsf}
\begin{algorithmic}[1]
\STATE \textbf{Input:} Window $W$, horizon $T$, spectral bound $\beta$, filter size $k$, base online learner $\mathcal{A}$
\STATE \textbf{Initialize:}
\STATE \quad Compute the top $k$ eigenvectors of $Z_W(\beta)$
\STATE \quad Form the filter bank $\Phi \in \mathbb{C}^{W\times k}$
\STATE \quad Initialize $\mathcal{A}$ in dimension $k$
\FOR{$t=1,2,\dots,T$}
    \STATE Observe input history $U_t=[u_t,u_{t-1},\dots,u_{t-W+1}]^\top$ (zero-padded if needed)
    \STATE \textbf{Filter:} Compute $h_t=\Phi^\dagger U_t \in \mathbb{C}^k$
    \STATE \textbf{Predict:} Query $\mathcal{A}$ on $h_t$ to obtain $\hat y_{t+1}$
    \STATE Observe the next outcome $y_{t+1}$
    \STATE \textbf{Update:} Update $\mathcal{A}$ with the pair $(h_t,y_{t+1})$
\ENDFOR
\end{algorithmic}
\end{algorithm}

Complex Spectral Filtering is an \emph{improper} learning method: it does not attempt to identify the hidden transition matrix or recover a minimal realization of the system. Instead, it learns directly in a universal low-dimensional feature space induced by the spectral geometry of the class.

This is particularly useful when the hidden state dimension is large but the response class is spectrally simple. In that regime, the relevant complexity is not the ambient dimension of the latent state but the effective spectral dimension $k^\star$.

\section{Performance Guarantee}

Henceforth we assume the \textit{realizable} setting, where the observations are generated by a generic complex-valued linear dynamical system with dynamics constrained to the spectral sector $\mathbb{C}_\beta$.

\begin{theorem}[Realizable CLDS Guarantee ]
\label{thm:regret}
Let $\{(u_t,y_t)\}_{t=1}^T$ be generated by a noiseless complex linear dynamical system with memory bounded by $W$ and diagonalizable transition matrix $A$ whose spectrum is contained in $\Cbeta$. Assume the inputs are bounded, $\|u_t\|_2 \le 1$ for all $t$.

Suppose Algorithm~\ref{alg:qsf} is run with filter bank size \(k\), and the
base learner is the Vovk--Azoury--Warmuth (VAW) forecaster \cite{azoury2001relative}. Define
$    V_t
    \triangleq
    \lambda I_k + \sum_{s=1}^{t} h_s h_s^\dagger,
    \qquad
    b_{t-1}
    \triangleq
    \sum_{s=1}^{t-1} h_s \overline{y_s}.
$
The prediction is
\begin{equation}
\label{eq:vaw_update}
    \hat y_t
    =
    b_{t-1}^\dagger V_t^{-1} h_t
    =
    \left(\sum_{s=1}^{t-1} y_s h_s^\dagger\right)
    \left(
        \lambda I_k + \sum_{s=1}^{t} h_s h_s^\dagger
    \right)^{-1}
    h_t .
\end{equation}
where $h_t = \Phi^\dagger U_t$ is the filtered history and $\lambda > 0$.
Then the cumulative squared prediction error satisfies
\begin{eqnarray*}
\label{eq:main_regret_bound}
    \sum_{t=1}^{T} |\hat{y}_{t} - y_{t}|^2 & 
    \;\le\;
    \tilde{O}\! \left( k \log T +       T \cdot (\beta W)^4 \cdot
        \exp\!\left(
            -\,c\,\frac{k}{k^\star \log W}
        \right)  \right),
\end{eqnarray*}
where $c>0$ is a universal constant, and the $\tilde{O}(\cdot)$ notation hides polylogarithmic factors and constants depending on the norm of diagonalizing matrices for $A$ and the norms of the system matrices $|A|,|B|,|C|$.
\end{theorem}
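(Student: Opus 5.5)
The argument splits into an approximation step and an online-regression step, joined by the standard VAW regret bound against an arbitrary comparator.

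\textbf{Regret decomposition.} For any fixed comparator $w\in\mathbb{C}^k$,
\[
\sum_t |\hat y_t - y_t|^2 \le \sum_t |w^\dagger h_t - y_t|^2 + \mathrm{Reg}_T(w).
\]
For the VAW forecaster with bounded features $\|h_t\|\le \|U_t\|\le \sqrt{W}$ (since $\Phi$ has orthonormal columns) and bounded outcomes $|y_t|\le Y$, the classical bound of \cite{azoury2001relative} gives
\[
\mathrm{Reg}_T(w)\le \lambda\|w\|^2 + k Y^2 \log\!\left(1+\frac{TW}{\lambda k}\right).
\]
This carries over verbatim to the complex Hermitian setting, because the proof only uses the Sherman--Morrison identity and $\log\det V_T$. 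This term yields the $k\log T$ part, with $\|w\|^2$ and $Y^2$ absorbed into the constants depending on $\|A\|,\|B\|,\|C\|$ and the diagonalizer.

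\textbf{Construction of the comparator.} Write $A=P\,\mathrm{diag}(z_1,\dots,z_d)P^{-1}$ with $z_j\in\Cbeta$. The impulse response over the window is $g=\sum_j c_j \mu_W(z_j)$, where $c_j=(CP)_{:,j}(P^{-1}B)_{j,:}$ and $\sum_j|c_j|\le \|C\|\|B\|\|P\|\|P^{-1}\|$. Then $y_t=g^\top U_{t-1}$ up to conjugation and indexing conventions. Let $\Pi=\Phi\Phi^\dagger$ and take $w$ so that $w^\dagger h_t=(\Pi g)^\top$-applied to the history, i.e.\ $w=\Phi^\dagger \bar g$ appropriately conjugated. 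Since $\|\Pi g\|\le\|g\|$, we get $\|w\|\le \sum_j|c_j|\sqrt{W}$. The per-step comparator error is then
\[
|((I-\Pi)g)^\top U|^2\le \|(I-\Pi)g\|^2 \le \Bigl(\sum_j |c_j|\,\|(I-\Pi)\mu_W(z_j)\|\Bigr)^2,
\]
and the task reduces to a uniform bound $\sup_{z\in\Cbeta}\|(I-\Pi)\mu_W(z)\|^2$.

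\textbf{Main obstacle: uniform, not averaged, approximation.} Theorem~\ref{thm:cutoff} controls only the average
\[
\int_{\Cbeta}\|(I-\Pi)\mu_W(z)\|^2\,dA=\sum_{i>k}\sigma_i,
\]
which is at most $W C_\beta e^{-c k/(k^\star\log W)}$ after summing the stretched-exponential tail. To pass from average to pointwise, I will use that $f(z)=\|(I-\Pi)\mu_W(z)\|^2$ is built from polynomials in $z,\bar z$ of degree less than $W$, with derivatives bounded by $\mathrm{poly}(W)$ on the disk. A Markov/Bernstein-type inequality then says that a small disk of radius $\rho\sim 1/W^2$ around any point $z\in\Cbeta$ has area at least $\Omega(\rho^2)$ inside the sector. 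On that disk $f$ stays at least $f(z)/2$ unless $f(z)$ is already small, so
\[
f(z)\lesssim \rho^{-2}\sum_{i>k}\sigma_i.
\]
Points near the origin and at the boundary arcs need a short separate check: the sector's area near its boundary still contains a cone of aperture $\beta$, so the area factor becomes $\beta\rho^2$, possibly costing an extra $1/\beta$. These polynomial-in-$W$ losses from Markov's inequality are where the $(\beta W)^4$ prefactor arises. I expect pinning down exactly that power, and keeping the $\beta$-dependence clean, to be the delicate part. If the perturbation argument is too lossy, the fallback is to bound $\|(I-\Pi)\mu_W(z)\|^2$ by expanding it in the eigenbasis and using subharmonicity of $|\langle v_i,\mu_W(z)\rangle|^2$ on the disk to get the mean-value inequality directly.

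\textbf{Conclusion.} Multiplying the pointwise bound by $(\sum_j|c_j|)^2$ and by $T$ gives the second term $T(\beta W)^4\exp(-c\,k/(k^\star\log W))$, after absorbing $C_\beta$ and adjusting $c$. Adding the VAW regret term completes the bound.
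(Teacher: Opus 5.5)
Your regret decomposition and your comparator $w=\Phi^\dagger\bar g$ are the same as the paper's. The two routes differ only in how the uniform bound on the truncation residual is obtained.

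The paper does not derive this bound itself. It imports from \cite{marsden2025universal} a per-eigenvector estimate $\max_{z\in\Cbeta}|\phi_i^\dagger\mu_W(z)|^2\lesssim(\beta^7W^4\sigma_i)^{1/3}$ and inserts the decay of each $\sigma_i$ from Theorem~\ref{thm:cutoff} to obtain Lemma~\ref{lem:uniform_approx}. After Cauchy--Schwarz it sums these maxima over $i>k$, so it controls $\sum_{i>k}\sup_z|\phi_i^\dagger\mu_W(z)|^2$.

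You instead bound the smaller quantity actually needed, $\sup_z f(z)$ with $f(z)=\|(I-\Pi)\mu_W(z)\|^2=\sum_{i>k}|\phi_i^\dagger\mu_W(z)|^2$. You combine the exact identity $\int_{\Cbeta}f\,dA=\sum_{i>k}\sigma_i$ (immediate from the definition of $Z_W(\beta)$) with an $L^1$-to-$L^\infty$ inequality for polynomials on the sector.

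The paper's route is shorter, since the analytic work is outsourced, and it keeps the dependence on each individual $\sigma_i$ explicit. Yours is self-contained, needing only Theorem~\ref{thm:cutoff} plus derivative bounds, and it avoids exchanging sum and maximum. Its price is explicit negative powers of $\beta$ from the thin-sector geometry: the area of the disk $D(z,\rho)$ inside $\Cbeta$ is only of order $\beta\rho^2$, and Markov's constant on the sector scales like $W^2/\beta$. So you will get a polynomial prefactor in $W$ and $1/\beta$ rather than literally $(\beta W)^4$. This only enters the choice of $k$ logarithmically, and the paper's own tracking of this prefactor is not tight either, so Corollary~\ref{cor:main} is unaffected.

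Several points need tightening.

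\emph{The local lower bound.} With $\rho\asymp 1/W^2$ fixed, the claim ``$f\ge f(z)/2$ on $D(z,\rho)$ unless $f(z)$ is already small'' cannot work with an absolute threshold. The a priori Lipschitz bound on $f$ is of order $W^2$, so the threshold is of order one. There are two standard fixes:
\begin{enumerate}
\item Run the argument at a maximizer $z^\ast$ of $f$. Markov's inequality on a convex (sub)sector gives $|\nabla f|\lesssim W^2 f(z^\ast)/\beta$, hence $\sup f\lesssim W^4\beta^{-3}\sum_{i>k}\sigma_i$.
\item Let the radius depend on the point, using that $\sqrt f$ is $W^{3/2}$-Lipschitz. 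This gives $f(z)\lesssim\bigl(W^3\beta^{-1}\sum_{i>k}\sigma_i\bigr)^{1/2}$, which suffices after halving $c$.
\end{enumerate}

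\emph{The subharmonicity fallback.} This does not work on its own. By the maximum principle the supremum is attained on $\partial\Cbeta$, e.g.\ on the arc $|z|=1$. Any mean-value disk centered there leaves the region on which the integral is controlled.

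\emph{Bookkeeping slips.} The comparator error carries a factor $\|U_t\|^2\le W$. Also, $Y$ is not bounded by the norm constants alone: with an eigenvalue at $z=1$ and constant inputs, $|y_t|$ is of order $W\sum_j|c_j|$. So absorbing $kY^2\log(\cdot)$ into $\tilde O(k\log T)$ hides up to a $W^2$ factor. The paper's regret display omits $Y^2$ altogether, so this looseness is shared.
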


The first term in \eqref{eq:main_regret_bound} is the statistical cost of learning a $k$-dimensional linear predictor in the filtered feature space. The second term is the approximation error incurred by truncating the universal filter bank after $k$ directions. By Theorem~\ref{thm:cutoff}, the discarded spectral tail decays exponentially once $k$ passes the effective dimension $k^\star$, so the approximation term becomes negligible with only a logarithmic overspecification beyond the intrinsic complexity of the class.
This yields the following fast-rate corollary.

\begin{corollary}[Fast Rates for Sector-Bounded CLDS]
\label{cor:main}
Under the assumptions of Theorem~\ref{thm:regret}, choosing
$
k
=
\Theta\!\left(
    \frac{\beta}{\pi}W  \log^2 (T \beta W)
\right)
$
gives
\[
\frac{1}{T}\sum_{t=1}^{T} |\hat{y}_{t} - y_{t}|^2
\;\le\;
\tilde{O}\!\left(
    \frac{\beta W \log^2(T \beta W)}{T}
\right).
\]
\end{corollary}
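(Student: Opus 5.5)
The corollary follows by substituting the stated choice of $k$ into the bound of Theorem~\ref{thm:regret} and dividing by $T$. With $k=\Theta\!\left(\frac{\beta}{\pi}W\log^2(T\beta W)\right)$, the first (statistical) term is
\[
\tilde O(k\log T)=\tilde O\!\left(\beta W\log^2(T\beta W)\log T\right).
\]
Since $\tilde O(\cdot)$ hides polylogarithmic factors, the extra $\log T$ is absorbed. Dividing by $T$ then gives the claimed rate, provided the approximation term is at most $O(1)$ in cumulative form.

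The main step is to make the approximation term small. We need
\[
T(\beta W)^4\exp\!\left(-c\,\frac{k}{k^\star\log W}\right)\le 1,
\]
which is equivalent to
\[
\frac{k}{k^\star\log W}\ge \frac{1}{c}\log\!\big(T(\beta W)^4\big).
\]
Here we use $k^\star=\lceil \beta W/\pi\rceil\le \beta W/\pi+1$, and in the nontrivial regime $\beta W\gtrsim 1$ this is $O(\beta W/\pi)$. We also use
\[
\log\!\big(T(\beta W)^4\big)\le 4\log(T\beta W)
\qquad\text{and}\qquad
\log W\le \log(T\beta W)+\log(1/\beta).
\]
It therefore suffices to take
\[
k\ge C\,k^\star\log W\,\log(T\beta W)
\]
for a large enough constant $C$ depending on $c$. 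Up to the $\log W$ versus $\log(T\beta W)$ bookkeeping, this is exactly $\Theta\!\left(\frac{\beta}{\pi}W\log^2(T\beta W)\right)$ once the hidden constant in $\Theta$ is chosen as a multiple of $1/c$. When $\beta\ge 1/W$ we have $W\le \beta W^2$, so $\log W\lesssim \log(T\beta W)$ as long as $T\ge W$; any leftover $\log(1/\beta)$ is treated as a problem-dependent polylog factor hidden in $\tilde O$.

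The degenerate regime $\beta W<1$ is handled separately. There $k^\star=1$, and choosing $k=\Theta(\log^2 T)$ still makes the exponential term at most $1$, since $(\beta W)^4<1$. The bound then reads $\tilde O(\log^3 T/T)$, which is consistent with the statement up to logarithms. One also needs $k\le W$ for the filter bank to be defined. If the prescribed $k$ exceeds $W$, we take $k=W$: the truncation is then exact, the approximation term vanishes, and the first term gives $O(W\log T)\le \tilde O(\beta W\log^2(T\beta W))$ whenever $\beta$ is bounded below by a constant times $1/\log^2$, and the remaining case is absorbed similarly.

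The main obstacle is the logarithmic bookkeeping. The exponent in Theorem~\ref{thm:regret} carries a $\log W$ loss, so the required overspecification is $\log W\cdot\log(T\beta W)$, and we must confirm this is dominated by $\log^2(T\beta W)$ in the regime of interest. I would handle this by assuming $T\ge W$, which is natural because otherwise the window is never filled and the bound is vacuous, and then absorbing the remaining constants into the $\tilde O$.
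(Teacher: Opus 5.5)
Your proposal is correct and is essentially the paper's argument. The paper gives no separate proof and treats the corollary as immediate from Theorem~\ref{thm:regret}: substitute the prescribed $k$, pick the hidden constant (a multiple of $1/c$) so that $T(\beta W)^4 e^{-ck/(k^\star\log W)}=O(1)$ using $\log W\lesssim\log(T\beta W)$, and divide $\tilde O(k\log T)$ by $T$. Your handling of the edge cases goes beyond what the paper states; note that the corollary implicitly assumes $\beta W\gtrsim 1$, and that in the $k>W$ case your needed inequality holds automatically, since $k>W$ already forces $\beta\log^2(T\beta W)\gtrsim 1$.
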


The key qualitative point is that the leading rate depends on the \emph{effective spectral complexity} $\beta W$, rather than on the ambient hidden state dimension $d$.

Due to space constraints, the analysis is deferred to appendices \ref{sec:analysis-main} and \ref{app:hadamard_decay}.

\section{Experimental Validation}

\subsection{Synthetic Sector-Bounded CLDS}

To empirically validate the effective-dimension prediction
$ k^\star = \left\lceil \frac{\beta}{\pi} W \right\rceil$,
we study a family of simulated complex linear dynamical systems whose spectra are constrained to the sector \(\mathbb{C}_{\beta}\). The goal is to test whether learnability is governed by the effective spectral dimension \(k^\star\), rather than by the ambient hidden dimension.

For each trial, we generate a random diagonalizable CLDS with hidden dimension \(d=20\) as follows:
\textbf{Spectral constraint.} The eigenvalues of the transition matrix \(A\) are drawn uniformly from the annular sector
    \[
    \{z \in \mathbb{C} : 0.85 \le |z| \le 0.95,\ |\arg(z)| \le \beta\}.
    \]
 \textbf{Finite-window generation.} We generate observations according to the length-\(W\) convolution
    \[
    y_t = \sum_{\tau=1}^{W} C A^{\tau-1} B u_{t-\tau},
    \]
    with \(W=100\), where the inputs \(u_t\) are i.i.d.\ standard complex Gaussian.\\
 \textbf{Train/test split.} For each system, we generate a trajectory of length \(T=1000\), using the first \(800\) samples for training and the remaining \(200\) samples for evaluation.

We apply Complex Spectral Filtering (Algorithm~\ref{alg:qsf}) with a filter bank built from the leading eigenvectors of \(Z_W(\beta)\). For each filter budget \(K \in \{5,10,\dots,95\}\), we project the lagged input history onto the top \(K\) filters and fit the linear readout with ridge regression (\(\lambda = 10^{-5}\)). We report the test mean-squared error (MSE).

We consider three spectral regimes: narrow sector (\(\beta = 0.2\pi\)), medium sector (\(\beta = 0.5\pi\)), and wide sector (\(\beta = 0.9\pi\)). For each configuration we average over 20 independent trials and report 95\% confidence intervals.

Figure~\ref{fig:clds_phase_transition} shows a clear phase transition: test error stays high until the model capacity \(K\) approaches the predicted effective dimension \(k^\star\), and then rapidly drops to the numerical floor. For the low-bandwidth regime \(\beta = 0.2\pi\), near-perfect prediction is achieved with roughly \(K=20\) filters even though the raw window length is \(W=100\). This matches the compression effect predicted by Theorem~\ref{thm:cutoff}.

\begin{figure}[htbp]
    \centering
    \begin{subfigure}[t]{0.32\textwidth}
        \centering
        \includegraphics[width=\linewidth]{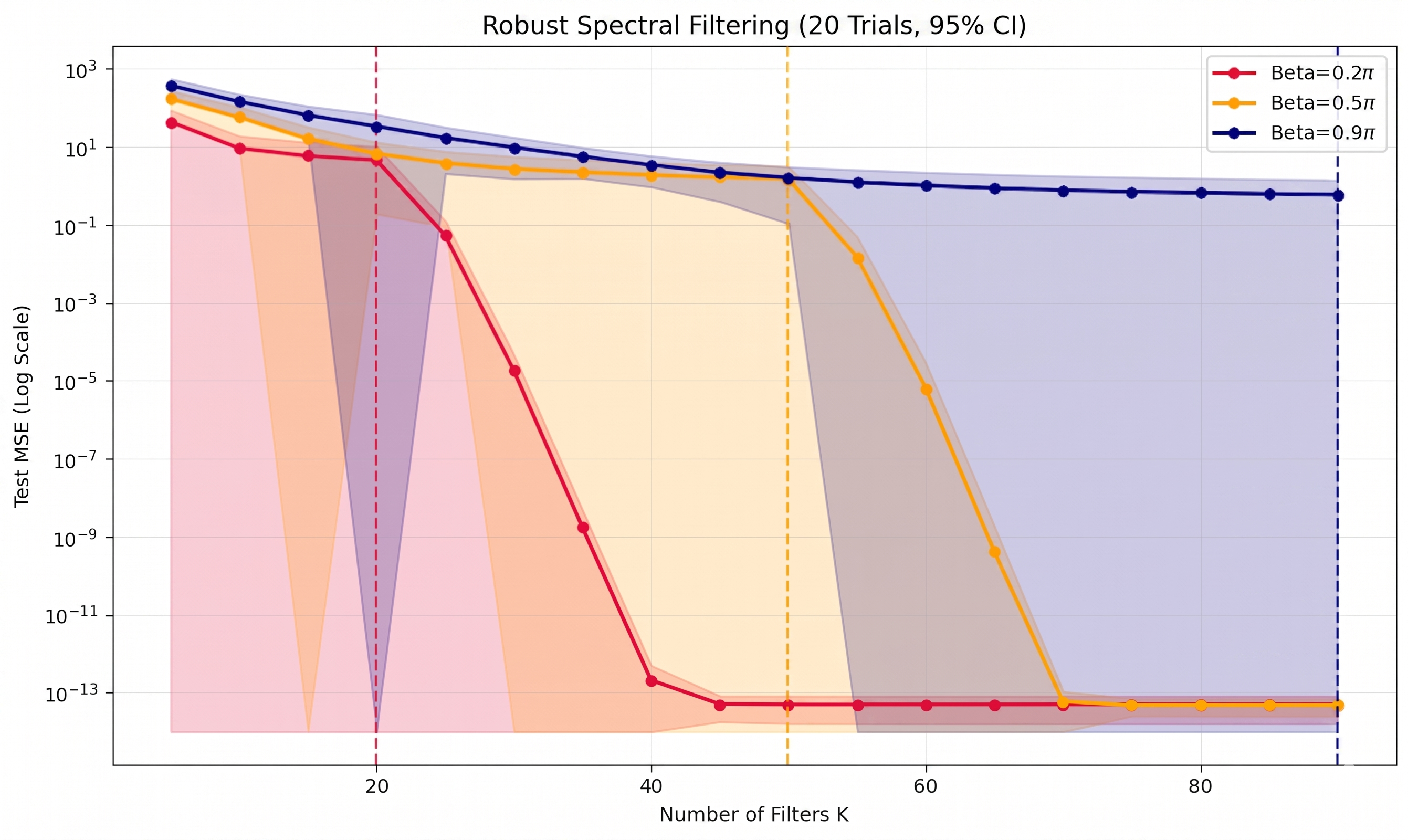}
        \caption{\textbf{Varying spectral regimes.} The transition near \(k^\star\) matches the effective-dimension prediction.}
        \label{fig:clds_phase_transition}
    \end{subfigure}\hfill
    \begin{subfigure}[t]{0.32\textwidth}
        \centering
        \includegraphics[width=\linewidth]{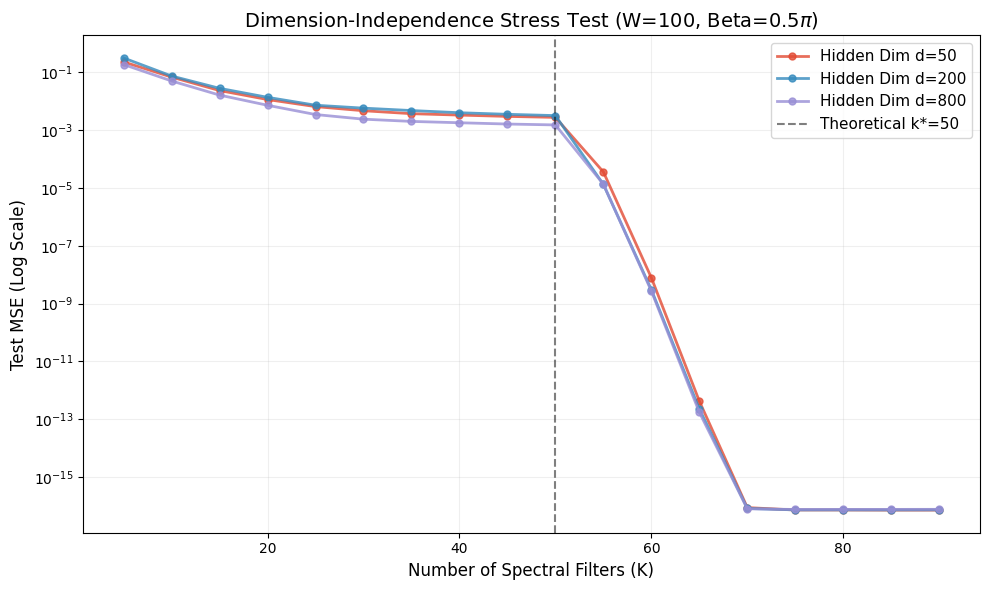}
        \caption{\textbf{Varying hidden dimension.} The curves coincide, showing independence from ambient dimension \(d\).}
        \label{fig:dim_test}
    \end{subfigure} \hfill
    \begin{subfigure}[t]{0.32\textwidth}
        \centering
        \includegraphics[width=\linewidth]{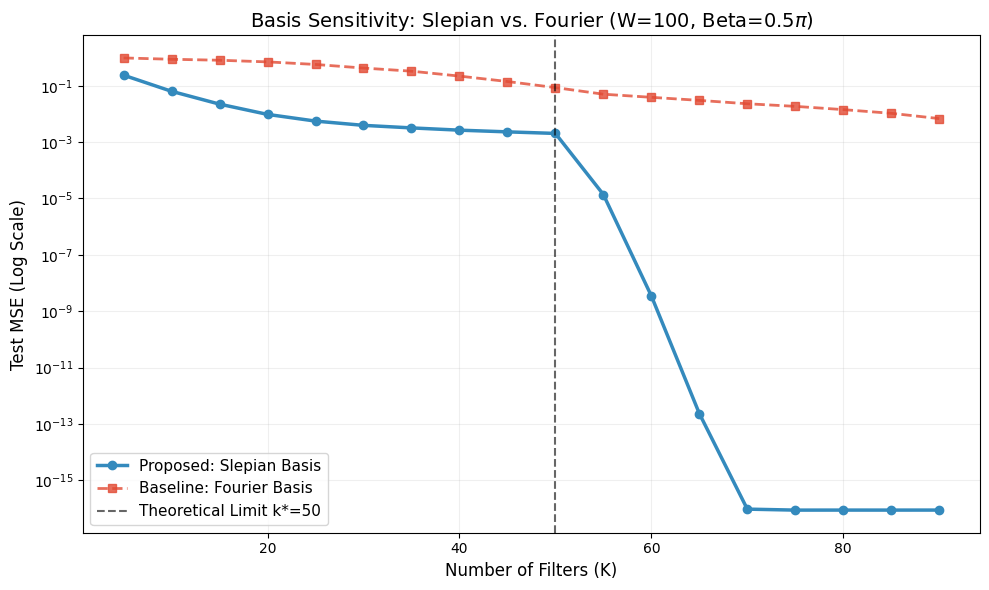}
        \caption{\textbf{Basis sensitivity: Slepian vs. Fourier.}  Slepian basis reaches numerical floor at cutoff,  Fourier baseline converges more slowly. } 
        \label{fig:basis_sensitivity}
    \end{subfigure}\hfill
    \caption{\textbf{Synthetic sector-bounded CLDS Experiments.} Test MSE of Complex Spectral Filtering on random CLDS with window size \(W=100\). \textbf{(a)} We vary the filter budget \(K\) across three spectral regimes (\(\beta=0.2\pi, 0.5\pi, 0.9\pi\)). Solid lines show the mean over 20 trials with 95\% confidence intervals. Vertical dashed lines mark the predicted effective dimension \(k^\star=\left\lceil \frac{\beta}{\pi}W \right\rceil\). \textbf{(b)} Test MSE for varying hidden dimensions \(d \in \{50,200,800\}\) with fixed \(\beta=0.5\pi\). The vertical dashed line marks \(k^\star = 50\).
    \textbf{(c)}  Test MSE comparison between Complex Spectral Filtering using the proposed Slepian basis and a baseline using low-frequency DFT modes. The experiment uses a random CLDS with \(W=100\) and \(\beta=0.5\pi\). The vertical dashed line marks the predicted effective dimension \(k^\star = 50\). 
    \label{fig:combined_experiments}}
\end{figure}


\subsection{Ablation Study: Effect of Hidden Dimension}

A central claim of our framework is that prediction difficulty is controlled by spectral complexity rather than by ambient state dimension. To test this, we fix \(W=100\) and \(\beta=0.5\pi\) and vary the hidden dimension \(d \in \{50,200,800\}\).

All other aspects of the data-generating process remain unchanged. Figure~\ref{fig:dim_test} shows that the learning curves are nearly indistinguishable across all three dimensions. The location of the transition remains stable and continues to occur near \(k^\star = 50\), which is consistent with the dimension-free dependence highlighted in Theorem~\ref{thm:regret}. Even though the ambient transition matrix becomes much larger as \(d\) increases, the number of spectral features needed for accurate prediction is essentially unchanged.


\subsection{Ablation Study: Slepian Basis vs.\ Fourier Basis}

A core theoretical claim of this paper is that the Slepian/DPSS basis is the right finite-window representation for sector-bounded CLDS. To test this, we compare Complex Spectral Filtering with a baseline that replaces the Slepian filter bank by the top-\(K\) low-frequency DFT modes.

We generate trajectories from a random complex-valued CLDS with hidden dimension \(d=100\), window \(W=100\), and sector width \(\beta=0.5\pi\). Both methods use the same training and evaluation protocol; only the choice of basis changes.

Figure~\ref{fig:basis_sensitivity} shows that the Slepian basis exhibits a much sharper transition and reaches the numerical floor near the predicted cutoff \(k^\star = 50\). The Fourier basis improves more slowly and leaves a pronounced tail of error. This is consistent with the fact that DPSS vectors are optimally concentrated on finite windows, whereas DFT modes suffer from spectral leakage in the truncated setting.


\section{Conclusion}

We introduced a spectral filtering framework for learning complex-valued linear dynamical systems whose spectra are confined to a sector of the unit disk. Rather than identifying the latent state-space realization, our method learns directly from finite-context input histories using a universal filter bank determined only by the spectral sector width $\beta$ and context window $W$. The central message is that the predictive complexity of this class is governed not by the ambient hidden dimension, but by an effective spectral dimension of order $\beta W$.

Our analysis connects this phenomenon to classical time--frequency concentration: sector-bounded complex dynamics induce finite-window impulse responses that are well-approximated by a low-dimensional Slepian-type subspace. This yields dimension-free regret guarantees for online prediction, where the statistical complexity scales with the effective spectral dimension rather than the full memory length or latent dimension. Experiments on synthetic sector-bounded CLDS support the predicted phase transition near the Shannon-number cutoff and demonstrate the advantage of the Slepian basis over standard Fourier features.

More broadly, these results suggest that spectral localization is a useful structural principle for learning oscillatory and long-memory dynamical systems. Future directions include extending the analysis to noisy observations, non-diagonalizable or non-normal dynamics, adaptive estimation of the sector width $\beta$, and deeper exploration of applications to structured state space models and predictive quantum dynamics.

\bibliographystyle{plain}
\bibliography{quantum}

\appendix
\onecolumn

\section{Lower bounds for learning an LDS} \label{appendix:lower_bound}

In this section we establish a simple lower bound for learning a linear dynamical system with hidden dimension $d$ (which can be thought of as $W$ in the earlier notation of this paper), if we don't restrict $\beta$. This shows that our refined bounds require a bound on $\beta$.

While we construct a specific hard instance to prove the $\Omega(d)$ bound, the paper of \cite{franccois2025uncertainty} provides a more refined lower bound, but one based on stricter assumptions. Specifically, they show a granular lower bound of $1 - S/K$ for the task of learning a time-shift of $K$ with state dimension $S$ on white noise, further illustrating the necessity of spectral constraints (small $\beta$) for efficient learning. Their lower bound, however, restricts the model class to linear recurrent filters, whereas our bound applies to any learning algorithm.

\begin{theorem}
\label{thm:lds_lower_bound}
For any online learning algorithm $\mathcal{A}$, there exists a linear dynamical system $(A, B, C, D)$ with hidden state dimension $d$, and a sequence of observations, such that the expected cumulative squared loss of $\mathcal{A}$ over $T \ge d$ rounds is at least $\Omega(d)$.
\end{theorem}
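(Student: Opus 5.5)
We will prove the bound with a randomized hard instance and Yao's principle: it suffices to exhibit a distribution over systems $(A,B,C,D)$ of hidden dimension $d$, together with a fixed input sequence, such that every deterministic learner suffers expected cumulative squared loss $\Omega(d)$. The natural choice is a \emph{shift-register} system whose output at time $t$ reveals one fresh, independent random parameter. Take $A\in\mathbb{R}^{d\times d}$ to be the cyclic shift (or the nilpotent down-shift), $B=e_1$, $D=0$, and $C=(c_1,\dots,c_d)$ with $c_j$ drawn i.i.d.\ uniformly from $\{-1,+1\}$. Then $CA^{\tau-1}B=c_\tau$ for $1\le\tau\le d$, so the impulse response over the first $d$ lags is exactly the random vector $c$. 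Note that the cyclic shift has eigenvalues at all $d$-th roots of unity. Its spectrum therefore fills the whole circle ($\beta=\pi$), which is consistent with the claim that the lower bound needs the sector to be unrestricted.

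For the input we use a single impulse: $u_0=1$ and $u_t=0$ otherwise. This input is bounded and fixed in advance. With it, $y_t=c_t$ for $t=1,\dots,d$ and the state is otherwise silent. The key step is then the information-theoretic observation. At round $t\le d$, the learner's prediction $\hat y_t$ is a measurable function of the inputs and of $y_1,\dots,y_{t-1}$, that is, of $c_1,\dots,c_{t-1}$. These are independent of $c_t$. Hence
\[
\mathbb{E}\,|\hat y_t-y_t|^2=\mathbb{E}\,|\hat y_t|^2+\mathbb{E}\,|c_t|^2-2\,\mathrm{Re}\,\mathbb{E}[\hat y_t]\,\overline{\mathbb{E}[c_t]}\ \ge\ 1,
\]
because $\mathbb{E}[c_t]=0$ and $|c_t|=1$. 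Summing over $t=1,\dots,d\le T$ gives expected loss at least $d$. Since this holds in expectation over $c$, some fixed realization of $C$ (and hence some deterministic system and observation sequence) forces loss $\ge d$ on any deterministic learner. For a randomized learner we apply the same argument conditionally on its internal randomness, or simply invoke Yao's principle.

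The main obstacle is bookkeeping rather than depth. We have to check that the construction matches the paper's conventions, namely the indexing $y_t=\sum_{\tau}CA^{\tau-1}Bu_{t-\tau}$, that the learner sees $u_t$ before predicting, and that it sees the earlier outputs. If the model requires the system to be stable rather than marginally stable, we can replace the cyclic shift by the nilpotent shift (all eigenvalues $0$), which changes nothing in the computation. If one prefers persistently exciting inputs, we can instead use i.i.d.\ Rademacher inputs and a hidden random delay or coefficient. The impulse construction is the cleanest, however, and already yields the stated $\Omega(d)$ bound with constant $1$.
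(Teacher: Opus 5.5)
Your proof is correct and is essentially the paper's: the paper runs the autonomous cyclic shift with $B=0$, $C=e_1^\top$ and a uniformly random initial state $h_0\in\{\pm1\}^d$, while you use the transposed version (Rademacher readout $C$, $B=e_1$, a single impulse input). Your version fits the input-driven form \eqref{eq:LDS_connection} directly and makes the derandomization and randomized-learner cases explicit, but the per-round bound $\mathbb{E}|\hat y_t-y_t|^2\ge 1$, coming from a fresh independent Rademacher bit, is the same. One caution on a side remark: your nilpotent variant has spectrum $\{0\}$, at the apex of every sector $\mathbb{C}_\beta$, so it shows the $\Omega(d)$ obstruction can come from non-normality alone and is not evidence that the bound needs an unrestricted sector; the paper's upper bound escapes it only through its diagonalizability assumption and the hidden condition-number constants.
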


\begin{proof}
We construct a specific autonomous LDS where predicting the observations is information-theoretically equivalent to predicting a sequence of independent fair coin flips.

Consider an autonomous LDS defined by state dimension $d$, input dimension $0$ (so $B=0, D=0$), and output dimension $1$. The dynamics are given by:
\begin{align*}
    h_{t+1} &= A h_t \\
    y_t &= C h_t
\end{align*}
Let $A$ be the cyclic permutation matrix that shifts vector components one position to the left:
$$
A = \begin{pmatrix}
0 & 1 & 0 & \cdots & 0 \\
0 & 0 & 1 & \cdots & 0 \\
\vdots & \vdots & \vdots & \ddots & \vdots \\
0 & 0 & 0 & \cdots & 1 \\
1 & 0 & 0 & \cdots & 0
\end{pmatrix}
$$
Let $C$ be the projection onto the first coordinate:
$$
C = \begin{pmatrix} 1 & 0 & \cdots & 0 \end{pmatrix}
$$
The initial state $h_0 \in \{-1, 1\}^d$ is chosen uniformly at random. This means each component $(h_0)_i$ is an independent Rademacher random variable, with $\mathbb{P}((h_0)_i = 1) = \mathbb{P}((h_0)_i = -1) = 1/2$.

At any time step $t \ge 0$, the state is $h_t = A^t h_0$ and the observation is $y_t = C A^t h_0$. Due to the structure of $A$ and $C$, for the first $d$ rounds $t \in \{0, 1, \dots, d-1\}$, we have precisely:
$$
y_t = (h_0)_{t+1}
$$
That is, the observation at time $t$ reveals exactly the $(t+1)$-th component of the initial unknown state vector $h_0$.

For any deterministic online learning algorithm $\mathcal{A}$, its prediction $\hat{y}_t$ at time $t$ is a function only of past observations:
$$
\hat{y}_t = f_t(y_0, y_1, \dots, y_{t-1})
$$
Substituting our relation $y_i = (h_0)_{i+1}$, this means $\hat{y}_t$ depends only on the first $t$ components of $h_0$:
$$
\hat{y}_t = f_t((h_0)_1, (h_0)_2, \dots, (h_0)_t)
$$
However, the true label $y_t$ is $(h_0)_{t+1}$, which is statistically independent of the variables $(h_0)_1, \dots, (h_0)_t$. Consequently, $\mathbb{E}[y_t | y_0, \dots, y_{t-1}] = \mathbb{E}[(h_0)_{t+1}] = 0$.

The expected squared loss at any round $t < d$ is bounded from below:
\begin{align*}
\mathbb{E}[(y_t - \hat{y}_t)^2] &= \mathbb{E}[y_t^2 - 2 y_t \hat{y}_t + \hat{y}_t^2] \\
&= \mathbb{E}[y_t^2] - 2\mathbb{E}[y_t]\mathbb{E}[\hat{y}_t] + \mathbb{E}[\hat{y}_t^2] \quad \text{(by independence)} \\
&= 1 - 0 + \mathbb{E}[\hat{y}_t^2] \\
&\ge 1
\end{align*}
Summing this expected loss over the first $d$ rounds yields a total expected loss for the algorithm of at least $d$.
Since the system is deterministic given $h_0$, the best expert in hindsight (knowing $h_0$) achieves a total loss of $0$.
Therefore, the expected regret is at least $d - 0 = d$.
\end{proof}

\section{Analysis of Main Theorem} \label{sec:analysis-main}

The proof of our theorem makes use of the following theorem from \cite{marsden2025universal}. 
\begin{theorem}[Proof of Lemma E.4\footnote{The result comes from combining Eq. 11 with Lemma A.1} \cite{marsden2025universal}]
Let $\{\phi_i\}_{i=1}^W$ be the eigenvectors of $Z_W(\beta)$ sorted by eigenvalue $\sigma_1 \ge \sigma_2 \ge \dots \ge \sigma_W$. For any $z \in \mathbb{C}_\beta$, the projection of its monomial trajectory onto the $i$-th basis vector decays with the corresponding eigenvalue $\sigma_i$ as follows,
    \begin{equation}
\label{eqn:bound1}
   \max_{z \in \mathbb{C}_\beta} |\phi_i^{\dagger} \mu_W(z)|^2 \leq \left( 24 \cdot (12)^2 \beta^7 W^4 \sigma_i \right)^{1/3}.
\end{equation}
\end{theorem}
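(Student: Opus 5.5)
The idea is to turn the integral definition of $Z_W(\beta)$ into a pointwise bound. Fix $i$ and set $f(z)\triangleq \phi_i^\dagger \mu_W(z)=\sum_{j=0}^{W-1}\overline{(\phi_i)_j}\,z^j$. This is a polynomial of degree less than $W$ with unit coefficient vector. Since $\phi_i$ is a unit eigenvector, Definition~\ref{def:sim} gives
\[
\sigma_i=\phi_i^\dagger Z_W(\beta)\phi_i=\int_{\Cbeta}|f(z)|^2\,dA(z).
\]
So $\sigma_i$ is the $L^2$ mass of $|f|^2$ over the sector, and we must show that the maximum $M\triangleq\max_{z\in\Cbeta}|f(z)|^2$ is at most $(24\cdot 12^2\beta^7W^4\sigma_i)^{1/3}$. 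The exponent $1/3$ suggests the standard argument: a smooth nonnegative function that is large at one point must stay large on a neighborhood whose area is governed by a Lipschitz constant. Integrating over that neighborhood bounds $M$ in terms of $\sigma_i$.

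\textbf{Lipschitz bound.} Let $g=|f|^2$. For $z\in\Cbeta$ we have $|f(z)|\le\|\mu_W(z)\|\le\sqrt W$, and
\[
|f'(z)|\le \Big(\sum_j j^2|z|^{2(j-1)}\Big)^{1/2}\le W^{3/2}.
\]
Hence $|\nabla g|\le 2|f||f'|\le 2W^2$ on the convex set $\Cbeta$. (For $\beta\le\pi/2$ the set is convex; for $\beta>\pi/2$ we restrict to segments inside the sector.) It follows that $g(z)\ge M-2W^2|z-z_0|$, where $z_0$ is the maximizer.

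\textbf{Area lower bound.} Let $\rho=M/(4W^2)$. On $D\triangleq B(z_0,\rho)\cap\Cbeta$ we have $g\ge M/2$, so $\sigma_i\ge \tfrac M2\,\mathrm{Area}(D)$. The main obstacle is bounding $\mathrm{Area}(D)$ from below uniformly over $z_0$. The worst case is $z_0$ at a sharp corner: either the apex $0$, with angle $2\beta$, or a corner $e^{\pm i\beta}$, where the arc meets a ray at angle $\pi/2$. In each case the intersection contains a circular sector of radius $\min(\rho,c)$ and opening $\gtrsim\min(\beta,1)$. That gives $\mathrm{Area}(D)\gtrsim \beta\rho^2$ whenever $\rho\lesssim\beta$, i.e. whenever the ball does not overflow the thin sector of width $\sim\beta$. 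I would verify this with an explicit inscribed cone argument, which is where the numerical constants $24$ and $12$ should come from. Combining,
\[
\sigma_i\gtrsim M\beta\frac{M^2}{W^4},\qquad\text{so}\qquad M\lesssim\Big(\frac{W^4\sigma_i}{\beta}\Big)^{1/3}.
\]

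\textbf{Matching the stated form.} The overflow case $\rho\gtrsim\beta$ gives $\mathrm{Area}(D)\gtrsim\beta\cdot\beta^2$ or similar, and so a weaker bound involving $\beta$ in a different power. I would combine the two regimes crudely, absorbing them into the $\beta^7$ factor (using $\beta\le\pi$ and $M\le W$) to reach the stated constant $(24\cdot12^2\beta^7W^4\sigma_i)^{1/3}$. If the constants do not match exactly, I would follow the cited Lemma~A.1 and Eq.~11 of \cite{marsden2025universal}, which presumably perform exactly this Lipschitz-plus-area computation. The cube-root structure is the robust part of the argument; the $\beta$ bookkeeping is the delicate step.
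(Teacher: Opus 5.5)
Your Lipschitz-plus-area strategy is the natural one. The paper itself gives no argument, only the citation of Eq.~11 and Lemma~A.1 of \cite{marsden2025universal}. The final step, absorbing the $\beta$-dependence into $\beta^7$ ``using $\beta\le\pi$ and $M\le W$'', fails, and it cannot be repaired. Your computation yields $M\lesssim (W^4\sigma_i/\beta)^{1/3}$. The hypothesis $\beta\le\pi$ only gives $\beta^{7}\le\pi^{8}\beta^{-1}$, which lets you pass from a $\beta^7$ bound to a $\beta^{-1}$ bound, not the reverse. For small $\beta$ the target is smaller than what you can prove by a factor of order $\beta^{8/3}$. In fact the displayed inequality is false for small $\beta$ under this paper's definition of $Z_W(\beta)$:
\begin{itemize}
\item For $W=1$ one has $Z_1(\beta)=\beta$ and $\phi_1=1$. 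The left side is $1$, while the right side is $(3456\,\beta^8)^{1/3}$, which is below $1$ once $\beta<3456^{-1/8}\approx 0.36$.
\item For general $W$, $\sigma_i\le\operatorname{Tr}Z_W(\beta)=\beta\sum_{j=1}^{W}1/j\to 0$ as $\beta\to 0$. The left side, however, is at least $\max_{r\in[0,1]}|\phi_i^\dagger\mu_W(r)|^2$. By compactness this is bounded below by a positive constant depending only on $W$, since a polynomial of degree $<W$ with unit coefficient vector cannot be uniformly small on $[0,1]$.
\item A non-degenerate instance: if $\beta(W-1)<\pi$, every entry of $Z_W(\beta)$ is positive. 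By Perron--Frobenius $\phi_1\ge 0$ entrywise, so $|\phi_1^\dagger\mu_W(1)|^2=\|\phi_1\|_1^2\ge 1$. For $\beta=1/W$ the right side is at most $15.2\,(1+\ln W)^{1/3}W^{-4/3}$, which is below $0.45$ at $W=20$.
\end{itemize}

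What your method does prove, more simply than you plan, is the bound with the natural scaling $\sigma_i/\beta$.
\begin{itemize}
\item \emph{Lipschitz step.} The estimate $|\nabla|f|^2|\le 2|f||f'|\le 2W^2$ holds on the whole closed unit disk, so no convexity of the sector is needed.
\item \emph{Area step.} When $\beta\le\pi/2$ the sector is convex with diameter at most $2$. So for any $z_0\in\mathbb{C}_\beta$ and $\rho\le 2$, the set $B(z_0,\rho)\cap\mathbb{C}_\beta$ contains the homothetic copy $z_0+\tfrac{\rho}{2}(\mathbb{C}_\beta-z_0)$, of area $\beta\rho^2/4$. This treats the apex, the corners and your ``overflow'' regime uniformly.
\item \emph{Conclusion.} With $\rho=M/(4W^2)$ this gives $\sigma_i\ge\beta M^3/(128\,W^4)$, i.e.\ $M\le(128\,W^4\sigma_i/\beta)^{1/3}$. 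For $\beta>\pi/2$ the same holds with $\beta$ replaced by $\pi/2$, using a half-disk sub-sector containing $z_0$.
\end{itemize}
This implies the displayed inequality, constants included, whenever $\beta\ge 2/3$, since then $128/\min(\beta,\pi/2)\le 3456\,\beta^7$. For small $\beta$ no argument can give it. The correct dependence there is through $\sigma_i/\beta$, and the $\beta^{7/3}$ factor in Lemma~\ref{lem:uniform_approx}, which is derived from this statement, inherits the same problem.
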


Our analysis departs from that of \cite{marsden2025universal} in that we prove and exploit much stronger spectral properties of the Spectral Information Matrix. Indeed, applying Theorem~\ref{thm:cutoff} to the above gives our final result. 
\begin{lemma}[Spectral Decay]
\label{lem:uniform_approx}
Let $\{\phi_i\}_{i=1}^W$ be the eigenvectors of $Z_W(\beta)$ sorted by eigenvalues
$\sigma_1 \ge \sigma_2 \ge \dots \ge \sigma_W$.
Then for any $z \in \mathbb{C}_\beta$ and any $i > k^*$,
\[
    \max_{z \in \mathbb{C}_\beta}
    |\phi_i^{\dagger}\mu_W(z)|^2
    \le
    C \beta^{7/3} W^{4/3}
    \exp\!\left(
        -\,c\,\frac{i}{k^* \log W}
    \right),
\]
for universal constants $C,c>0$.
\end{lemma}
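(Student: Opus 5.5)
The plan is to combine the per-eigenvector bound \eqref{eqn:bound1} imported from \cite{marsden2025universal} with the singular value decay of Theorem~\ref{thm:cutoff}. Since $Z_W(\beta)$ is Hermitian positive semidefinite (it is an integral of rank-one PSD matrices $\mu_W(z)\mu_W(z)^\dagger$ against a positive measure), its eigenvalues coincide with its singular values. The ordering $\sigma_1\ge\dots\ge\sigma_W$ used in \eqref{eqn:bound1} therefore matches the ordering in Theorem~\ref{thm:cutoff}. I will record this first, so that $\sigma_i = \sigma_i(Z_W(\beta))$ for every $i$.

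Next, for $i > k^\star$ (so in particular $i\ge k^\star$), Theorem~\ref{thm:cutoff} gives
\[
\sigma_i \le C_\beta \exp\!\left(-c_\beta \frac{i}{k^\star\log W}\right).
\]
Substituting this into \eqref{eqn:bound1} yields, uniformly in $z\in\Cbeta$,
\[
|\phi_i^\dagger \mu_W(z)|^2 \le \left(3456\, \beta^7 W^4 C_\beta\right)^{1/3} \exp\!\left(-\frac{c_\beta}{3}\,\frac{i}{k^\star \log W}\right).
\]
Since $(\beta^7 W^4)^{1/3}=\beta^{7/3}W^{4/3}$, this is exactly the claimed form, with $C = (3456\,C_\beta)^{1/3}$ and $c=c_\beta/3$. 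Taking the maximum over $z$ requires no extra work, because the bound does not depend on $z$.

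The only real obstacle is that the statement asks for \emph{universal} constants $C,c$, whereas Theorem~\ref{thm:cutoff} as stated only supplies $\beta$-dependent constants $C_\beta,c_\beta$. I would handle this in one of two ways. The first option is to inspect the proof of Theorem~\ref{thm:cutoff} in Appendix~\ref{app:hadamard_decay}. The Hankel factor $H_W$ is independent of $\beta$, and the DPSS plunge-region bounds for $S_W(\beta)$ have constants that are uniform for $\beta$ bounded away from $0$ and $\pi$, which would let us extract uniform constants. The second, fallback option is to state the lemma with constants depending only on $\beta$; these are absorbed into the $\tilde O$ of Theorem~\ref{thm:regret} in any case. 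One further care point is the factor $2\beta$ in the entries of $Z_W(\beta)$: it means $C_\beta$ scales at most like $\beta$, so no hidden negative power of $\beta$ enters the prefactor. I expect that verifying this uniformity, rather than the algebraic substitution, is where the effort lies.
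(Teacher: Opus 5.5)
Your proposal is exactly the paper's argument: the paper proves Lemma~\ref{lem:uniform_approx} solely by substituting Theorem~\ref{thm:cutoff} into \eqref{eqn:bound1}, and your constants $C=(3456\,C_\beta)^{1/3}$, $c=c_\beta/3$ are what that substitution gives. Your universality worry is well founded, since the paper silently treats $C_\beta,c_\beta$ as universal, but your first remedy cannot succeed: because $(0,1]\subset\mathbb{C}_\beta$ and a nonzero polynomial cannot vanish on $(0,1]$, for fixed $W\ge 2$ the quantity $\max_{z\in\mathbb{C}_\beta}|v^\dagger\mu_W(z)|^2$ is bounded below, uniformly over unit vectors $v$, by a positive constant depending only on $W$, whereas $C\beta^{7/3}W^{4/3}$ tends to $0$ as $\beta\to0$ (and so does the right side of \eqref{eqn:bound1}, since $\sigma_i\le\mathrm{Tr}\,Z_W(\beta)=O(\beta\log W)$), so only your $\beta$-dependent fallback is tenable, and even its derivation rests on \eqref{eqn:bound1}, which as quoted fails for small $\beta$.
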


\begin{proof}[Detailed Proof of Theorem \ref{thm:regret}]
We decompose the total cumulative error into \textit{estimation error} (regret) and \textit{approximation error} (bias).
Let $w^* = \text{argmin}_{w \in \mathbb{C}^{k}} \sum_{t=1}^{T} |w^* h_{t} - y_{t}|^2$ be the best offline predictor in our chosen subspace $\Phi$.
The total error is:
\begin{align*}
    \sum_{t=1}^{T} |\hat{y}_{t} - y_{t}|^2 & =  \underbrace{\left( \sum_{t=1}^{T} |\hat{y}_{t} - y_{t}|^2 - \sum_{t=1}^{T} |w^{*\top} h_{t} - y_{t}|^2 \right)}_{\text{Estimation Error}} \\ 
    & + \underbrace{\sum_{t=1}^{T} |w^\top h_{t} - y_{t}|^2}_{\text{Approximation Error}}
\end{align*}

\paragraph{1. Estimation Error (Special Case: VAW):}
We analyze the specific case where $\mathcal{A}$ is the Vovk-Azoury-Warmuth (VAW) forecaster \cite{azoury2001relative} (also known as Forward-Step Ridge Regression). Following the analysis of Follow-The-Regularized-Leader (FTRL) with quadratic regularization (e.g., Orabona, 2019), the cumulative regret is bounded by the log-determinant of the data covariance matrix.
For the square loss, the regret satisfies:
\begin{equation}
    \mathcal{R}_T(w^*) \le \lambda \|w^*\|^2 + \sum_{t=1}^T \log \left( 1 + h_t^* V_{t-1}^{-1} h_t \right),
\end{equation}
where $V_{t-1} = \lambda I_k + \sum_{s=1}^{t-1} h_s h_s^{\dagger}$ is the regularized covariance matrix.
Using the Elliptical Potential Lemma, the sum term is exactly the log-ratio of determinants:
\[ \sum_{t=1}^T \log(1 + h_t^* V_{t-1}^{-1} h_t) = \log \frac{\det(V_T)}{\det(\lambda I)}. \]
Assuming bounded inputs $\|h_t\|^2 \le L^2$ within the subspace, the trace of the covariance matrix is bounded by $\text{Tr}(V_T) \le k \lambda + T L^2$. By the AM-GM inequality, the determinant is maximized when the energy is distributed equally among the $k$ eigenvalues:
\[ \det(V_T) \le \left( \frac{\text{Tr}(V_T)}{k} \right)^{k} \le \left( \lambda + \frac{T L^2}{k} \right)^{k}. \]
Taking the logarithm yields the dimension-dependent bound:
\[ \log \det(V_T) - k \log \lambda \le k \log \left( 1 + \frac{T L^2}{\lambda k} \right). \]
Thus, the estimation error scales linearly with the filter bank size $k$ (not $W$):
\[ \text{Estimation Error} \le \tilde{O}(k \log T). \]

\paragraph{2. Approximation Error:}
We bound the error of the best linear predictor $w^*$ in our basis.
Assuming the system is a linear time-invariant (LTI) system driven by inputs $u_t$ with memory $W$, the true observation is 
$$y_{t+1} = \sum_{\tau=1}^W C A^{\tau-1} B u_{t-\tau+1} = \sum_{\tau=1}^W \bar{C} D^{\tau-1} \bar{B} u_{t-\tau+1} , $$
where $A$ is diagonalized into $A = P D P^{-1}$, for a diagonal matrix $D$, and $P,P^{-1}$ are absorbed into $\bar{C} = C P , \bar{B} = P^{-1} B$.
Let the $d$ eigenvalues of $A$ be $\{z_j\} \subset \mathbb{C}_\beta$, then we can write the true system as a linear combination of eigenvector products as:
\[ y_{t+1} = \sum_{j=1}^d c_j \cdot \mu_W(z_j)^\top U_t , \]
where recall that $U_t = [u_t, u_{t-1}, \dots, u_{t-W+1}]$ (padded if necessary for $t < W$), $\mu_W(z) = [1, z, z^2, \dots, z^{W-1}]^\top$, and $c_j$ is given by $c_j = \bar{C}_j^\top  \bar{B}_{j}$.

We choose the ideal weights $w^* = \sum_{j=1}^d {c}_j (\Phi^{\dagger} \mu_W(z_j))$ to match the projection of these true parameters onto our basis $\Phi$.
The approximation error at time $t$ is:
\begin{align*}
    e_t &= w^{*\top} h_t - y_{t+1} = - \sum_{j=1}^d c_j \sum_{i=k+1}^W (\phi_i^{\dagger} \mu_W(z_j)) (\phi_i^\top U_t)
\end{align*}
Applying Cauchy-Schwarz and assuming bounded inputs, i.e. $\|u_t\|_2 \leq 1$ (so that $\|U_t\|^2 \le W$):
\begin{align*}
    |e_t|^2 &\le \left( \sum_{j=1}^d |c_j| \sqrt{\sum_{i=k+1}^W |\phi_i^{\dagger} \mu_W(z_j)|^2} \sqrt{\sum_{i=k+1}^W |\phi_i^\top U_t|^2} \right)^2 \\
    &\le \left( \sum_{j=1}^d |c_j| \sqrt{\sum_{i=k+1}^W \max_{z \in \mathbb{C}_\beta} |\phi_i^{\dagger} \mu_W(z)|^2} \cdot \|U_t\|_2 \right)^2
\end{align*}
Using Lemma \ref{lem:uniform_approx}, each term in the sum decays exponentially and the entire sum can be bounded by a multiple of its first term at $i = k+1$.
\begin{align*}
|e_t|^2 & \le \left( \|c\|_1 \cdot \sqrt{W \cdot  \left(C \beta^{7/3} W^{4/3} \exp(-ck/(k^*\log W)) \right)  } \right)^2 \\
& \leq C \|c\|_1^2 \beta^{7/3} W^{7/3} \exp(-ck/(k^* \log W)).
\end{align*}
Notice that each term $c_i$ can be bounded as $|c_i| = |\bar{C}_i||\bar{B}_i| \leq |P||P^{-1}||C_i||B_i| $. This is a constant which depends on the condition number of $A$, as well as the norms of the matrices $B,C$. Since eventually it will appear logarithmically in the choice of $k$, we consider it a constant. Moreover, since the powers of $\beta$ and $W$ will also appear only logarithmically we simplify them with an upper bound of $7/3 \leq 3$. Thus, 
\[ |e_t|^2 \le C W^3 \beta^3 \exp(-\frac{c k}{ k^* \log W}) . \]

Summing this constant error bound over all $T$ time steps yields the final approximation term.
Combining these two terms yields the final generative bound.
\end{proof}

\section{Proof of Theorem~\ref{thm:cutoff}: Spectral Decay via Hadamard Products}
\label{app:hadamard_decay}

This appendix gives a proof of Theorem~\ref{thm:cutoff},
establishing stretched-exponential spectral decay governed by the
effective dimension $k^*$. The matrix is decomposable as a Hadamard product of a Hankel matrix and the so-called Slepian matrix. The theorem combines the work of \cite{beckermann2017singular} and \cite{hazan2017learning} which show the decay of the Hankel matrix and the work of Slepian which characterizes the spectrum of the DPSS matrix. This combination uses a novel and critical lemma on the spectrum of the Hadamard product of two PSD matrices to get the final result.  

\subsection{A General Hadamard Eigenvalue Tail Bound}

\begin{lemma}[Generalized Hadamard Eigenvalue Decay]
\label{lem:hadamard_decay}
Let $A, B \in \mathbb{C}^{n \times n}$ be Hermitian positive semi-definite matrices with
eigenvalues $\lambda_1 \ge \lambda_2 \ge \dots \ge 0$.
For any integers $k, h \ge 1$, the $(kh+1)$-th eigenvalue of their Hadamard product
$A \circ B$ satisfies
\begin{equation}
\lambda_{kh+1}(A \circ B)
\;\le\;
\lambda_1(B)\,\mathrm{Tr}(A_{>k})
\;+\;
\lambda_1(A)\,\mathrm{Tr}(B_{>h})
\;+\;
\mathrm{Tr}(A_{>k})\,\mathrm{Tr}(B_{>h}),
\end{equation}
where $\mathrm{Tr}(M_{>r}) := \sum_{j=r+1}^n \lambda_j(M)$ denotes the tail sum.
\end{lemma}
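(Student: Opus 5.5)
Split each factor into its top spectral part and its tail. Write $A = A_k + A_{>k}$, where $A_k=\sum_{j\le k}\lambda_j(A)a_ja_j^\dagger$ is the rank-$k$ truncation of the eigendecomposition and $A_{>k}=A-A_k\succeq 0$. Likewise write $B = B_h + B_{>h}$ with $\operatorname{rank} B_h\le h$. Bilinearity of the Hadamard product gives
\[
A\circ B = A_k\circ B_h \;+\; \bigl(A_{>k}\circ B + A_k\circ B_{>h}\bigr).
\]
The first term has rank at most $kh$, because the Hadamard product of rank-one matrices $(a a^\dagger)\circ(b b^\dagger)=(a\circ b)(a\circ b)^\dagger$ is rank one. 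All remaining terms are PSD by the Schur product theorem. Weyl's inequality, $\lambda_{kh+1}(X+Y)\le \lambda_{kh+1}(X)+\lambda_1(Y)$, applied with $X=A_k\circ B_h$ (so $\lambda_{kh+1}(X)=0$) then reduces the claim to bounding $\lambda_1$ of the remainder. Equivalently, $\lambda_{kh+1}(A\circ B)=\min_{\operatorname{rank} R\le kh}\|A\circ B-R\|_{op}$ for PSD matrices, and we take $R=A_k\circ B_h$.

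The remainder is
\[
A_{>k}\circ B_h + A_k\circ B_{>h} + A_{>k}\circ B_{>h}.
\]
Each piece is PSD, so its operator norm is at most its trace or can be bounded by a Schur-type inequality. The key tool is the classical Schur bound: for PSD $M,N$,
\[
\lambda_1(M\circ N)\le \max_i N_{ii}\,\lambda_1(M)\le \lambda_1(N)\lambda_1(M),
\]
and also $\lambda_1(M\circ N)\le \operatorname{Tr}(M\circ N)=\sum_i M_{ii}N_{ii}\le \lambda_1(N)\operatorname{Tr}(M)$.

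Applying the trace version to each piece gives:
\begin{itemize}
\item $\lambda_1(A_{>k}\circ B_h)\le \lambda_1(B_h)\operatorname{Tr}(A_{>k})\le \lambda_1(B)\operatorname{Tr}(A_{>k})$;
\item $\lambda_1(A_k\circ B_{>h})\le \lambda_1(A)\operatorname{Tr}(B_{>h})$;
\item $\lambda_1(A_{>k}\circ B_{>h})\le \operatorname{Tr}(A_{>k}\circ B_{>h})\le \operatorname{Tr}(A_{>k})\,\lambda_1(B_{>h})\le \operatorname{Tr}(A_{>k})\operatorname{Tr}(B_{>h})$.
\end{itemize}
The last step uses $\lambda_1(B_{>h})=\lambda_{h+1}(B)\le\operatorname{Tr}(B_{>h})$. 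Summing the three bounds gives exactly the stated inequality. A slightly sharper but equivalent route is to group $A_{>k}\circ B + A_k\circ B_{>h}$ and bound it by $\lambda_1(B)\operatorname{Tr}(A_{>k})+\lambda_1(A)\operatorname{Tr}(B_{>h})$, which already suffices. The three-term form simply matches the statement.

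The only step needing care is the trace-type Schur bound $\operatorname{Tr}(M\circ N)=\sum_i M_{ii}N_{ii}\le \max_i N_{ii}\sum_i M_{ii}$, together with $N_{ii}\le\lambda_1(N)$ for PSD $N$. It is elementary but is where PSD-ness of both factors is essential. I expect no real obstacle beyond checking that every piece in the decomposition is PSD, so that $\lambda_1\le\operatorname{Tr}$ and Weyl's inequality apply cleanly. That holds because the truncations come from the spectral decomposition and the Schur product theorem preserves positive semidefiniteness.
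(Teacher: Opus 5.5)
Your proof is correct and is essentially the paper's argument: the same split of $A\circ B$ into $A_{\le k}\circ B_{\le h}$ (rank at most $kh$, via $(aa^\dagger)\circ(bb^\dagger)=(a\circ b)(a\circ b)^\dagger$) plus three PSD cross terms, then Weyl's inequality, then the same three bounds $\lambda_1(B)\,\mathrm{Tr}(A_{>k})$, $\lambda_1(A)\,\mathrm{Tr}(B_{>h})$ and $\mathrm{Tr}(A_{>k})\,\mathrm{Tr}(B_{>h})$. The only difference is how you justify $\lambda_1(M\circ N)\le\lambda_1(N)\,\mathrm{Tr}(M)$: you use $\lambda_1(M\circ N)\le\mathrm{Tr}(M\circ N)=\sum_i M_{ii}N_{ii}\le\max_i N_{ii}\,\mathrm{Tr}(M)$, whereas the paper expands $M$ into rank-one terms and uses $\lambda_1(D_vPD_v^\dagger)\le\lambda_1(P)$ for $\|v\|_2\le 1$. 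Both are valid, and your regrouping also correctly shows that the third term of the bound is not needed.
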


\begin{proof}
Let
\[
A = \sum_j \mu_j u_j u_j^\dagger,
\qquad
B = \sum_i \zeta_i w_i w_i^\dagger
\]
be eigendecompositions.
Decompose
\[
A = A_{\le k} + A_{>k},
\qquad
B = B_{\le h} + B_{>h}.
\]
Then
\[
A \circ B
=
\underbrace{A_{\le k} \circ B_{\le h}}_{M_1}
+
\underbrace{A_{>k} \circ B_{\le h}}_{M_2}
+
\underbrace{A_{\le k} \circ B_{>h}}_{M_3}
+
\underbrace{A_{>k} \circ B_{>h}}_{M_4}.
\]

\paragraph{Rank bound.}
\[
M_1
=
\sum_{j=1}^k \sum_{i=1}^h
\mu_j \zeta_i
(u_j \circ w_i)(u_j \circ w_i)^\dagger,
\]
hence $\mathrm{rank}(M_1) \le kh$ and $\lambda_{kh+1}(M_1)=0$.

\paragraph{Spectral bounds.}
Recall that Weyl's inequality states that
\begin{equation*}
    \lambda_{i+j-1}(A + B) \leq \lambda_i(A) + \lambda_j(B).
\end{equation*}
Therefore we have
\[
\lambda_{kh+1}(A \circ B)
\le
\lambda_{kh+1}(M_1) + \lambda_1(M_2+M_3+M_4) \le \lambda_1(M_2) + \lambda_1(M_3) + \lambda_1(M_4).
\]
In general it holds that for any PSD matrix P, non-negative $\rho_i$, and orthonormal vectors $v_i$
\begin{equation*}
    \lambda_1 \left( \sum_{i=1}^n \rho_i v_i v_i^{\dagger} \circ P \right) \leq \sum_{i=1}^n \rho_i \lambda_1(P).
\end{equation*}
Indeed, by Weyl's inequality we have
\begin{equation*}
     \lambda_1 \left( \sum_{i=1}^n \rho_i v_i v_i^{\dagger} \circ P \right) \leq \sum_{i=1}^n \lambda_1 \left( \rho_i v_i v_i^{\dagger} \circ P  \right) = \sum_{i=1}^n \rho_i \lambda_1 \left(  v_i v_i^{\dagger} \circ P  \right). 
\end{equation*}
For any PSD matrix $P$ and vector $v$ such that $\norm{v}_2 \leq 1$ it is true that $\lambda_1(vv^\dagger \circ P) \le \lambda_1(P)$. Indeed, if $D_v= \textrm{diag}(v)$ we can write $vv^\dagger \circ P$ as $D_v P D_v^{\dagger}$. Then
\begin{equation*}
    \lambda_1(D_v P D_v^{\dagger}) = \max_{x: \norm{x}_2 \leq 1} x^{\dagger} \left( D_v P D_v^{\dagger} \right)x  = \max_{y: \norm{D_v^{\dagger} y} \leq 1} y^{\dagger} P y \leq  \max_{y: \norm{ y} \leq 1} y^{\dagger} P y = \lambda_1(P),
\end{equation*}
where the last inequality holds since $\norm{D_v}_2 \leq 1$ and therefore whenever $\norm{y}_2\leq 1$ implies $\norm{D_v y}_2 \leq 1$. Therefore we have,
\begin{align*}
\lambda_1(M_2)
&\le
\mathrm{Tr}(A_{>k})\,\lambda_1(B),\\
\lambda_1(M_3)
&\le
\mathrm{Tr}(B_{>h})\,\lambda_1(A),\\
\lambda_1(M_4)
&\le
\mathrm{Tr}(A_{>k})\,\mathrm{Tr}(B_{>h}).
\end{align*}
Summing completes the proof.
\end{proof}

\subsection{The Hankel Matrix Component}
We borrow Lemma E.1 from \cite{hazan2017learning} which is a result that largely follows from \cite{beckermann2017singular}.
    \begin{lemma}[Lemma E.1 \cite{hazan2017learning}]
    \label{lemma:hankel_decay}
        Let $\lambda_j$ be the $j$-th top eigenvalue of $A_T$. Then for all $T \geq 10$,
        \begin{equation}
            \lambda_j \leq \min \left( \frac{3}{4}, K \cdot c^{-j/ \log T} \right)
        \end{equation}
    \end{lemma}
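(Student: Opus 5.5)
The plan is to treat $A_T$ as a real positive semidefinite Hankel matrix and get its eigenvalue decay from the displacement-rank/Zolotarev machinery of \cite{beckermann2017singular}, which is how the bound is obtained in \cite{hazan2017learning}. Two facts are needed: a uniform bound $\lambda_1\le 3/4$ for the first branch of the minimum, and geometric decay of $\lambda_j/\lambda_1$ at rate $\exp(-\Omega(j/\log T))$ for the second branch. The exact normalization of the entries of $A_T$ has to be matched to the convention of \cite{hazan2017learning}, and I flag that below.

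\textbf{Step 1: moment representation.} First write $A_T$ as a moment matrix $(A_T)_{ij}=\int_0^1 x^{i+j}\,d\nu(x)$ for a positive measure $\nu$ on $[0,1]$. For the radial factor $H_W$ this is immediate, since $1/(j+k+2)=\int_0^1 r^{j+k+1}\,dr$ gives $d\nu=r\,dr$. For the matrix of \cite{hazan2017learning} a weight of the form $(1-x)^2$-type gives the same form. This shows at once that $A_T$ is Hermitian PSD, so eigenvalues coincide with singular values. It also gives $v^\dagger A_T v=\int_0^1|\sum_i v_i x^i|^2\,d\nu(x)$.

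\textbf{Step 2: the bound $\lambda_1\le 3/4$.} I would bound $\lambda_1$ through this quadratic form, either by a Schur test with weights $i^{-1/2}$ or by Hilbert's inequality. Done directly, this yields an absolute constant. Getting the sharp value $3/4$ depends on the precise normalization of $A_T$ in \cite{hazan2017learning}. There one can bound the entries by $\frac{2}{(i+j)^3-(i+j)}$ and sum rows, since the Gershgorin row sums are bounded by a convergent series whose value is at most $3/4$.

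\textbf{Step 3: decay via displacement structure.} Any real PSD Hankel matrix $H$ satisfies a displacement equation $D_1H-HD_2=$ (rank at most $2$), coming from its structure as a Cauchy-like or Hankel operator. The positivity of $\nu$ on $[0,1]$ lets one, after a Pick/Cauchy reformulation, separate the spectra of the displacement operators onto disjoint real intervals. Beckermann--Townsend's theorem then gives
\[
\sigma_{1+2k}(H)\le Z_k(E,F)\,\sigma_1(H)\le 16\,\exp\!\Big(-\tfrac{\pi^2 k}{2\log(8T/\pi)}\Big)\,\sigma_1(H),
\]
where $Z_k$ is the Zolotarev number of the two intervals. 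To reach the stated form I would set $k=\lfloor (j-1)/2\rfloor$, absorb the parity loss and the factor $16\sigma_1\le 12$ into $K$, and use $\log(8T/\pi)\le 2\log T$ for $T\ge 10$. This yields $\lambda_j\le K c^{-j/\log T}$ with $c=e^{\pi^2/8}$ up to constant adjustments.

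\textbf{Main obstacle.} The hardest step is invoking the Zolotarev bound correctly: verifying that the positive-measure Hankel structure genuinely yields a rank-$2$ displacement with separated spectra, and tracking how $\log(8T/\pi)$ enters the exponent. I would follow \cite{beckermann2017singular}, Corollary 5.5, which states exactly this bound for real PSD Hankel matrices of size $T$. I would then only check its hypotheses (PSD, real, Hankel) from Step 1, taking the theorem itself as given.
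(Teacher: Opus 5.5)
Your proposal is correct and follows the route the paper itself indicates; the paper gives no proof beyond citing \cite{hazan2017learning} and \cite{beckermann2017singular}. That route is an absolute bound on $\lambda_1$, followed by the Beckermann--Townsend decay bound for real PSD Hankel matrices at $j=1$, with all constant-level slack absorbed into $K$ and $c$. Your row-sum bound works for the first step, since with entries $\frac{2}{(i+j)^3-(i+j)}$ the $i$-th row sum telescopes to at most $\frac{1}{i(i+1)}\le\frac12$.

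Your hedge about normalization is warranted. The $\frac34$ branch holds only for that matrix of \cite{hazan2017learning}, not for this paper's radial factor $H_W$ with entries $\frac{1}{j+k+2}$: its top eigenvalue already exceeds $0.82$ at $W=3$ and tends to $\pi$. So for $H_W$ only the decay branch survives, with $\lambda_1\le\pi$ from Hilbert's inequality folded into $K$, and that is all the paper's Hadamard argument uses.
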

    
\subsection{The Prolate (DPSS) Component}

This subsection states the spectral properties of the discrete prolate
spheroidal sequence (DPSS) kernel, also called the Slepian kernel, following
the classical work of \cite{slepian1978prolate} and the non-asymptotic
eigenvalue bounds of \cite{karnik2021improved}.

\begin{lemma}[Identification with Slepian's Prolate Matrix from \cite{slepian1978prolate}]
\label{lem:prolate_identification}
Let $B\in\mathbb{C}^{W\times W}$ be the Toeplitz matrix with entries
\[
B_{jk} \;=\; 2\beta\,\sinc\!\big((j-k)\beta\big),
\qquad j,k\in\{0,1,\dots,W-1\},
\]
where $\sinc(x)=\sin(x)/x$ with $\sinc(0)=1$.
Define $\beta_{\mathrm{slep}}:=\beta/(2\pi)$ and let $p(W,\beta_{\mathrm{slep}})\in\mathbb{R}^{W\times W}$ be
Slepian's prolate matrix with entries
\[
p(W,\beta_{\mathrm{slep}})_{jk}
\;:=\;
\frac{\sin\!\big(2\pi \beta_{\mathrm{slep}}(j-k)\big)}{\pi(j-k)},
\quad (j\neq k),
\qquad
p(W,\beta_{\mathrm{slep}})_{jj}:=2\beta_{\mathrm{slep}}.
\]
Then
\[
B \;=\; 2\pi\, p(W,\beta_{\mathrm{slep}}).
\]
In particular, the eigenvalues of $B$ are exactly $2\pi$ times the eigenvalues
$\lambda_k(W,\beta_{\mathrm{slep}})$ studied by Slepian, and the associated eigenvectors
are the discrete prolate spheroidal sequences (DPSS).
Moreover, the Shannon number satisfies
\[
2W\,\beta_{\mathrm{slep}} \;=\; \frac{\beta}{\pi}W \;=\; k^* .
\]
\end{lemma}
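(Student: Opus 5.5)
The identity $B = 2\pi\, p(W,\beta_{\mathrm{slep}})$ will be checked entrywise, splitting into off-diagonal and diagonal entries. The spectral statements then follow from linearity, and the Shannon number identity is arithmetic.

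For $j\neq k$, set $m=j-k\neq 0$. Unfolding the definition of $\sinc$ gives
\[
B_{jk}=2\beta\,\frac{\sin(m\beta)}{m\beta}=\frac{2\sin(m\beta)}{m}.
\]
With $\beta_{\mathrm{slep}}=\beta/(2\pi)$ we have $2\pi\beta_{\mathrm{slep}}m=\beta m$. Hence
\[
2\pi\, p(W,\beta_{\mathrm{slep}})_{jk}=2\pi\,\frac{\sin(\beta m)}{\pi m}=\frac{2\sin(m\beta)}{m},
\]
which agrees with $B_{jk}$. On the diagonal, $B_{jj}=2\beta\,\sinc(0)=2\beta$. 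Slepian's convention gives $2\pi\cdot 2\beta_{\mathrm{slep}}=2\pi\cdot\beta/\pi=2\beta$, so these agree as well. The only point needing care is this diagonal convention: it is the continuous extension of the off-diagonal formula, since $\sin(2\pi\beta_{\mathrm{slep}}m)/(\pi m)\to 2\beta_{\mathrm{slep}}$ as $m\to 0$. This is also where one confirms that Slepian's normalization matches $\sinc(x)=\sin(x)/x$ rather than the normalized $\sin(\pi x)/(\pi x)$.

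Because $B$ is a positive scalar multiple of $p$, the two matrices share eigenvectors, and $\lambda_k(B)=2\pi\,\lambda_k(W,\beta_{\mathrm{slep}})$ with the same ordering. By Slepian's definition \cite{slepian1978prolate}, the eigenvectors of $p(W,\beta_{\mathrm{slep}})$ are the DPSS. Our matrix $B$ corresponds to a half-bandwidth $\beta_{\mathrm{slep}}=\beta/(2\pi)\le 1/2$, which is an admissible parameter for Slepian's theory because $\beta\in(0,\pi]$. Finally, $2W\beta_{\mathrm{slep}}=2W\beta/(2\pi)=\beta W/\pi$. This equals $k^*$ up to the ceiling in Definition~\ref{def:effective_dimension}, so the equality should be read as $k^*=\lceil 2W\beta_{\mathrm{slep}}\rceil$.

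There is no real obstacle here beyond these bookkeeping checks.
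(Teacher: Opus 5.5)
Your proposal is correct and matches the paper's proof: both check $B = 2\pi\,p(W,\beta_{\mathrm{slep}})$ entrywise, using $\beta = 2\pi\beta_{\mathrm{slep}}$ off the diagonal and $2\pi\cdot 2\beta_{\mathrm{slep}} = 2\beta$ on it, and the spectral and Shannon-number claims then follow immediately. Your caveat that the final equality should read $k^\star = \lceil 2W\beta_{\mathrm{slep}}\rceil$ under Definition~\ref{def:effective_dimension} is a fair correction to a point the paper glosses over (the paper drops the ceiling in Theorem~\ref{thm:dpss_cutoff}); one small adjustment to your side remark is that at the endpoint $\beta=\pi$ one gets $B = 2\pi I$, so the DPSS identification is degenerate there and not covered by Slepian's theory, which takes $\beta_{\mathrm{slep}} < 1/2$.
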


\begin{proof}
Using $\sinc(x)=\sin(x)/x$, for $j\neq k$ we have
\[
B_{jk}
=2\beta\cdot \frac{\sin\!\big((j-k)\beta\big)}{(j-k)\beta}
=
2\cdot \frac{\sin\!\big((j-k)\beta\big)}{(j-k)}
=
2\pi\cdot \frac{\sin\!\big(2\pi \beta_{\mathrm{slep}}(j-k)\big)}{\pi(j-k)}
=
2\pi\, p(W,\beta_{\mathrm{slep}})_{jk},
\]
where we used $\beta=2\pi \beta_{\mathrm{slep}}$.
On the diagonal, $B_{jj}=2\beta$ and
$2\pi p_{jj}=2\pi(2\beta_{\mathrm{slep}})=4\pi \beta_{\mathrm{slep}}=2\beta$.
The Shannon number identity follows immediately.
\end{proof}


\begin{theorem}[DPSS Spectral Concentration and Cutoff
{\cite{slepian1978prolate,karnik2021improved}}]
\label{thm:dpss_cutoff}
Let $B$ be as in Lemma~\ref{lem:prolate_identification}, and let
\[
    k^\star := \frac{\beta}{\pi}W.
\]
Then the spectrum of $B$ exhibits a sharp concentration phenomenon:
there exist constants $\delta_\beta,a_\beta,C_\beta>0$, depending only on
the bandwidth fraction $\beta_{\mathrm{slep}}$, such that
\begin{itemize}
\item (\emph{Passband}) For $k \le (1-\delta_\beta)k^\star$,
\[
    \lambda_k(B)\ge 2\pi\bigl(1-C_\beta e^{-a_\beta W}\bigr).
\]
\item (\emph{Stopband}) For $k \ge (1+\delta_\beta)k^\star$,
\[
    \lambda_k(B)\le 2\pi C_\beta e^{-a_\beta W}.
\]
\end{itemize}
Consequently, if
\[
    k^\star_+
    :=
    \min\left\{
        W,\,
        \left\lceil (1+\delta_\beta)k^\star \right\rceil
    \right\},
\]
then there exist constants $C'_\beta,a'_\beta>0$ such that
\[
    \sum_{j>k^\star_+}\lambda_j(B)
    \le
    C'_\beta e^{-a'_\beta W}.
\]
\end{theorem}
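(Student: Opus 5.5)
The plan is to transfer everything to Slepian's prolate matrix and then read off the classical eigenvalue asymptotics. By Lemma~\ref{lem:prolate_identification}, $B = 2\pi\, p(W,\beta_{\mathrm{slep}})$ with $\beta_{\mathrm{slep}}=\beta/(2\pi)\in(0,1/2]$. So it suffices to prove the three claims for the eigenvalues $\lambda_k(W,\beta_{\mathrm{slep}})$ of $p$, and then multiply by $2\pi$. I will use the standard facts that these eigenvalues are distinct, lie in $(0,1)$, and have trace $\mathrm{Tr}\,p = 2W\beta_{\mathrm{slep}} = k^\star$. When $\beta=\pi$ the matrix $p$ is the identity and the stopband is empty, so I will assume $\beta_{\mathrm{slep}}<1/2$ and handle the endpoint separately.

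\textbf{Step 1 (pass/stopband at a fixed fraction).} I invoke Slepian's asymptotics \cite{slepian1978prolate} for fixed bandwidth fraction $\beta_{\mathrm{slep}}$. For $k=\lfloor 2W\beta_{\mathrm{slep}}(1-\delta)\rfloor$, $1-\lambda_k$ decays like $e^{-\gamma(\delta,\beta_{\mathrm{slep}})W}$ up to polynomial factors. Symmetrically, for $k=\lceil 2W\beta_{\mathrm{slep}}(1+\delta)\rceil$, $\lambda_k$ decays like $e^{-\gamma' W}$, with rates $\gamma,\gamma'>0$ for every fixed $\delta>0$. Fix any $\delta_\beta>0$, say $\delta_\beta=1/2$, and absorb the polynomial prefactors by slightly decreasing the rate to $a_\beta<\min(\gamma,\gamma')$. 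Because the eigenvalues are ordered, the bound at the boundary index propagates. Every $k\le(1-\delta_\beta)k^\star$ gets $\lambda_k\ge 1-C_\beta e^{-a_\beta W}$, and every $k\ge(1+\delta_\beta)k^\star$ gets $\lambda_k\le C_\beta e^{-a_\beta W}$.

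Slepian's statements are asymptotic in $W$, so a non-asymptotic version needs one more observation. There is a $W_0(\beta)$ beyond which the asymptotic inequalities hold with constant $2$. For $W<W_0$ I enlarge $C_\beta$ so that $C_\beta e^{-a_\beta W}\ge 1$, which makes both inequalities trivial because $\lambda_k\in(0,1)$. As a cross-check, or as an alternative route, I would also appeal to the non-asymptotic transition-count bound of \cite{karnik2021improved}. It states
\[
\#\{k:\epsilon<\lambda_k<1-\epsilon\}\lesssim \log(W)\log(1/\epsilon).
\]
I would combine it with the trace identity $\sum_k\lambda_k=k^\star$ to locate the transition band inside $[(1-\delta_\beta)k^\star,(1+\delta_\beta)k^\star]$.

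\textbf{Step 2 (tail sum).} Every index $j>k^\star_+$ is in the stopband, so
\[
\sum_{j>k^\star_+}\lambda_j(B)\le W\cdot 2\pi C_\beta e^{-a_\beta W}\le C'_\beta e^{-a'_\beta W},
\]
with $a'_\beta=a_\beta/2$ and $C'_\beta=2\pi C_\beta\sup_W We^{-a_\beta W/2}$. If $k^\star_+=W$ the sum is empty and there is nothing to prove.

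\textbf{Main obstacle.} The hard part is Step 1 in a genuinely exponential, non-asymptotic form. The Karnik et al.\ bound alone does not suffice. With $\epsilon=e^{-aW}$ it allows $O(W\log W)$ transition eigenvalues, which exceeds $\delta_\beta k^\star$ unless $a\lesssim 1/\log W$. So I would first try to rely on Slepian's fixed-fraction exponential asymptotics, which are exactly the regime of the theorem since $\beta$ is fixed. The finite-$W$ range is then covered by inflating $C_\beta$. If a fully self-contained argument is wanted, the fallback is to make Slepian's WKB-type estimates explicit. In that case the constants $\delta_\beta,a_\beta,C_\beta$ would come out depending on $\beta_{\mathrm{slep}}$, consistent with the statement.
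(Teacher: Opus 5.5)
Your proposal is correct and follows the paper's route. The paper gives no separate argument: it imports this statement from Slepian's fixed-bandwidth asymptotics and Karnik et al.\ through Lemma~\ref{lem:prolate_identification}, and your additions (inflating $C_\beta$ to cover small $W$, and bounding the tail by $W\cdot 2\pi C_\beta e^{-a_\beta W}$ with the rate halved) are the routine steps the paper leaves implicit. Your remark is also accurate and worth keeping: the Karnik et al.\ transition count alone gives only decay like $e^{-cW/\log W}$ at a fixed fraction beyond $k^\star$, so for fixed $\beta$ the constant-rate exponential has to come from Slepian's asymptotics.
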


\subsection{Spectral Decay for Slepian--Hankel Products}

We now combine the Hadamard tail bound
(Lemma~\ref{lem:hadamard_decay}) with the Hankel concentration
(Lemma~\ref{lemma:hankel_decay}) and the DPSS concentration
(Theorem~\ref{thm:dpss_cutoff}) to obtain the spectral decay of the
Spectral Information Matrix $Z_W(\beta)=A\circ B$. Theorem~\ref{thm:cutoff}
follows directly from the theorem below, since
$k^\star_+=\Theta_\beta(k^\star)$.

\begin{theorem}[Slepian--Hadamard Spectral Decay]
Let $A,B \in \mathbb{C}^{W\times W}$ be Hermitian PSD matrices satisfying:
\begin{enumerate}
    \item \textbf{Hankel decay:} there exist constants $C_A,c_A>0$ such that
    \[
    \lambda_j(A)
    \le
    C_A \exp\!\left(-c_A \frac{j}{\log W}\right)
    \qquad\text{for all } j\ge 1.
    \]

    \item \textbf{Slepian structure:} there exist constants $C_B,c_B>0$ and
    a cutoff $k^\star_+=\Theta_\beta(k^\star)$ such that
    \[
    \lambda_1(B)\le C_B,
    \qquad
    \sum_{j>k^\star_+}\lambda_j(B)\le C_B e^{-c_B W}.
    \]
\end{enumerate}

Then there exist constants $K_1,K_2,c>0$ such that for all $s\ge 1$,
\[
\lambda_{s k^\star_+ +1}(A\circ B)
\;\le\;
K_1 \log W \cdot
\exp\!\left(
    -\,c\,\frac{s}{\log W}
\right)
+
K_2 e^{-c_B W},
\]
with the convention that $\lambda_j=0$ for $j>W$. In particular, after
adjusting constants, for every $k\ge k^\star$,
\[
\lambda_k(A\circ B)
\;\le\;
K_1 \log W \cdot
\exp\!\left(
    -\,c\,\frac{k}{k^\star \log W}
\right)
+
K_2 e^{-c_B W}.
\]
\end{theorem}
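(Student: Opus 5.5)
We will apply Lemma~\ref{lem:hadamard_decay} with the Hankel factor $A$ truncated at rank $s$ and the Slepian factor $B$ truncated at rank $h=k^\star_+$. Since $s\cdot k^\star_+ +1$ is exactly the index controlled by that lemma, we obtain
\[
\lambda_{s k^\star_+ +1}(A\circ B)\le \lambda_1(B)\,\mathrm{Tr}(A_{>s})+\lambda_1(A)\,\mathrm{Tr}(B_{>k^\star_+})+\mathrm{Tr}(A_{>s})\,\mathrm{Tr}(B_{>k^\star_+}).
\]
The remaining work is to bound the two tail sums using hypotheses 1 and 2.

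For the Hankel tail, sum the geometric series. Put $q=\exp(-c_A/\log W)$. Then
\[
\mathrm{Tr}(A_{>s})\le C_A\sum_{j>s}q^j = C_A\frac{q^{s+1}}{1-q}.
\]
Since $1-q\ge \tfrac12\min(1,c_A/\log W)$, we get $1/(1-q)=O(\log W)$ for $W\ge 3$. Hence $\mathrm{Tr}(A_{>s})\le K\log W\,\exp(-c_A s/\log W)$, and this is the source of the $\log W$ prefactor. Also $\lambda_1(A)\le C_A$.

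For the Slepian part we use $\lambda_1(B)\le C_B$ and $\mathrm{Tr}(B_{>k^\star_+})\le C_B e^{-c_BW}$ directly. Substituting, the first term is at most $C_B K\log W\, e^{-c_A s/\log W}$ and the second is at most $C_A C_B e^{-c_BW}$. The cross term is a product of the two tails. It is at most $K\log W\cdot C_B e^{-c_BW}$, using $e^{-c_As/\log W}\le 1$. Because $\log W\, e^{-c_BW}$ is bounded by a constant times $e^{-c_BW/2}$, this term is absorbed into the $K_2e^{-c_BW}$ term after adjusting constants. Alternatively, we can bound it by $C_B^2\cdot K\log W\, e^{-c_As/\log W}$ and absorb it into the first term. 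Choosing the second option avoids changing the exponent $c_B$, so we take it. This gives the first display with $c=c_A$, $K_1=K(C_B+C_B^2)$ and $K_2=C_AC_B$. For indices $j>W$ the bound is trivial by convention.

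For the ``in particular'' statement, take $k\ge k^\star$ and let $s=\lfloor (k-1)/k^\star_+\rfloor$, so that $sk^\star_+ +1\le k$. Eigenvalues are nonincreasing, so $\lambda_k\le\lambda_{sk^\star_++1}$. When $s\ge1$ we have $s\ge (k-k^\star_+)/k^\star_+$. Since $k^\star_+=\Theta_\beta(k^\star)$, this gives $s/\log W\ge c'k/(k^\star\log W)-O(1)$, and the additive constant goes into $K_1$. When $s=0$, that is $k^\star\le k\le k^\star_+$, the ratio $k/k^\star$ is $O_\beta(1)$. Here $\lambda_k(A\circ B)\le\lambda_1(A\circ B)\le\lambda_1(A)\lambda_1(B)$ by the diagonal-scaling argument inside the proof of Lemma~\ref{lem:hadamard_decay}, and this is at most $K_1\log W\,e^{-ck/(k^\star\log W)}$ after enlarging $K_1$. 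The step I expect to need the most care is this reindexing. The constant from $k^\star_+/k^\star$ must be shown to depend only on $\beta$, and the small-$k$ regime must be covered without spoiling uniformity in $W$.
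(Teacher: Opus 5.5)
Your proposal is correct and follows essentially the same route as the paper: apply Lemma~\ref{lem:hadamard_decay} with cutoffs $s$ and $k^\star_+$, bound $\mathrm{Tr}(A_{>s})$ by a geometric series to get the $\log W$ prefactor, absorb the cross term into the first term, and reindex using $k^\star_+=\Theta_\beta(k^\star)$, with a trivial bound for $k^\star\le k\le k^\star_+$. Two small remarks. Your choice $s=\lfloor (k-1)/k^\star_+\rfloor$ is slightly more careful than the paper's $s=\lfloor k/k^\star_+\rfloor$, which can produce the index $k+1$. In the small-$k$ regime, the argument inside Lemma~\ref{lem:hadamard_decay} literally gives $\lambda_1(A\circ B)\le \mathrm{Tr}(A)\,\lambda_1(B)=O(\log W)$; the sharper $\lambda_1(A)\lambda_1(B)$ you cite needs Schur's bound $\max_i A_{ii}\,\lambda_1(B)$, but either bound suffices.
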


\begin{proof}
Apply Lemma~\ref{lem:hadamard_decay} with cutoffs $k=s$ and
$h=k^\star_+$. This gives
\[
\lambda_{s k^\star_+ +1}(A\circ B)
\le
\lambda_1(B)\operatorname{Tr}(A_{>s})
+
\lambda_1(A)\operatorname{Tr}(B_{>k^\star_+})
+
\operatorname{Tr}(A_{>s})\operatorname{Tr}(B_{>k^\star_+}).
\]

By the Hankel decay assumption,
\[
\operatorname{Tr}(A_{>s})
\le
\sum_{j>s}
C_A \exp\!\left(-c_A \frac{j}{\log W}\right)
\le
K_A \log W \cdot
\exp\!\left(
    -c_A\frac{s}{\log W}
\right),
\]
where the last inequality follows by summing a geometric series. Also
$\lambda_1(A)\le C_A$. By the Slepian tail assumption,
\[
\operatorname{Tr}(B_{>k^\star_+})
\le
C_B e^{-c_B W}.
\]
Substituting these bounds yields
\[
\lambda_{s k^\star_+ +1}(A\circ B)
\le
K_1 \log W \cdot
\exp\!\left(
    -c\frac{s}{\log W}
\right)
+
K_2 e^{-c_B W},
\]
after absorbing the product term into the first term.

For the second display, use $k^\star_+=\Theta_\beta(k^\star)$ and monotonicity
of the eigenvalues. Specifically, for $k\ge k^\star_+$ take
$s=\lfloor k/k^\star_+\rfloor$ and adjust constants. For the remaining range
$k^\star\le k<k^\star_+$, the ratio $k/k^\star=O_\beta(1)$, so the exponential
factor is bounded below by a constant; the claim then follows from the trivial
bound $\lambda_k(A\circ B)\le \lambda_1(A\circ B)\le C\log W$, again after
adjusting constants.
\end{proof}

\end{document}